\theoremstyle{plain}
\newtheorem{theorem}{Theorem}
\newtheorem{lemma}[theorem]{Lemma}
\theoremstyle{definition}
\DeclareMathOperator*{\argmin}{arg\,min}
\DeclareMathOperator{\lcm}{lcm}
\DeclarePairedDelimiter\floor{\lfloor}{\rfloor}
\def\final{0}  
\def\iflong{\iffalse}
\newcommand{\anote}[1]{{\color{orange}[{\tiny \textbf{Alex:} \bf #1}]\marginpar{\color{orange}*}}}
\newcommand{\jnote}[1]{{\color{blue}[{\tiny \textbf{Julian:} \bf #1}]\marginpar{\color{blue}*}}}
\newcommand{\anote}[1]{}
\newcommand{\knote}[1]{}
\newcommand{\jnote}[1]{}
\newcommand{\lnote}[1]{}
\title{Optimizing Periodic Operations for Efficient Inland Waterway Lock Management}
\date{}
\author{Julian Golak}
\affil{{\footnotesize Institute of Operations Management, University of Hamburg Business School, Hamburg, Germany. \\ \texttt{julian.golak@uni-hamburg.de}.}\footnote{Corresponding author}}
\author{Alexander Grigoriev}
\author{Freija van Lent}
\author{Tom van der Zanden}
\affil{{\footnotesize Department of Data Analytics and Digitalisation, Maastricht University, The Netherlands. \\ \texttt{a.grigoriev@maastrichtuniversity.nl, f.vanlent@maastrichtuniversity, t.vanderzanden@maastrichtuniversity.nl}.} }
\begin{document}
\maketitle

\begin{abstract}
    In inland waterways, the efficient management of water lock operations impacts the level of congestion and the resulting uncertainty in inland waterway transportation. 
    To achieve reliable and efficient traffic, schedules should be easy to understand and implement, reducing the likelihood of errors.
    The simplest schedules follow periodic patterns, reducing complexity and facilitating predictable management.
    Since vessels do not arrive in perfectly regular intervals, periodic schedules may lead to more wait time.
    The aim of this research is to estimate this cost by evaluating how effective these periodic schedules manage vessel traffic at water locks.

    The first objective is to estimate a periodic arrival pattern that closely matches a dataset of irregular vessel arrivals at a specific lock.
    We develop an algorithm that, given a fixed number of vessel streams, solves the problem in polynomial time.
    The solution then serves as input for the subsequent part, where we consider algorithms that compute operational schedules by formulating an optimisation problem with periodic arrival patterns as input, and the goal is to determine a periodic schedule that minimises the long-run average waiting time of vessels.    
     We present a polynomial-time algorithm for the two-stream case and a pseudo-polynomial-time algorithm for the general case, along with incremental polynomial-time approximation schemes.
     
    In our numerical experiments, use AIS data to construct a periodic arrival pattern closely matching the observed data. Our experiments demonstrate that when evaluated against actual data, intuitive and straightforward policies often outperform optimal policies specifically trained on the periodic arrival pattern.
\medskip

\noindent \textbf{Keywords:} Algorithms; Logistics; Scheduling; Dynamic Programming; AIS Data

\end{abstract}
\section{Introduction}
    The European Green Deal, introduced by the European Commission \cite{europeancommission2020greendeal}, is a comprehensive plan with the ambitious goal of transforming Europe into the world's first climate-neutral continent by 2050. 
    A central focus of this plan is the promotion of sustainable transport modes as alternatives to traditional road transport. Inland waterways play a pivotal role in this context, offering several environmental advantages that align with the Green Deal's sustainability goals, including reduced greenhouse gas emissions, improved air quality, and decreased road traffic congestion. 
    Further, the Green Deal identifies the development of advanced algorithms, data-driven approaches, and intelligent decision support systems as crucial components for driving the transition towards a more sustainable, decarbonized transport system.
    To fully leverage the potential of inland waterways, efficient and integrated waterway management is essential. 
    A critical component of integrated waterway management is the operation of locks. 
    During peak times, the number of arriving vessels often exceeds the lock’s throughput capacity, leading to congestion \cite{golak2022optimizing}. 
    Strategically scheduling lock operations can reduce congestion and delays, enhancing the efficiency and attractiveness of inland waterway transportation.

    Operational schedules should be simple and intuitive, effectively reducing congestion while minimizing errors among operators. 
    Furthermore, simple schedules also enhance robustness and predictability for skippers. This simplicity and predictability can be achieved through periodic operational schedules, which organize operations at consistent, regular intervals. Implementing such periodic schedules represents an essential first step toward fully automated logistics, where lock operations are planned and managed in advance, enabling skippers to efficiently coordinate their voyages. The transition toward automation offers multiple advantages, including greater reliability and enhanced operational efficiency.
    
    In this work, we focus on algorithmic research on computing periodic operational schedules for the lock. That is, we consider the problem of finding a operational schedule for the lock, such that each operations repeats in regular time intervals.

    \paragraph{Related Literature.}
    Due to their practical importance, research on lock scheduling is vast.
    Previous research considered lock management mainly as a classical scheduling problem:
    Given a set of arriving vessels, the goal is to find a schedule of lock operations that minimizes objectives such as fuel consumption, emissions, or waiting time, while respecting the unique operational characteristics of the lock. 
    The most fundamental problem, addressed by Passchyn et al.~\cite{passchyn2016.3}, involved a single lock and a fleet of incoming vessels. They developed polynomial-time algorithms that  minimize the total waiting time.
    This was extended to include parallel lock chambers, allowing simultaneous vessel processing, by Passchyn, Briskorn, and Spieksma~\cite{passchyn2016.1}. 
    They provided a complexity analysis and devised polynomial-time solutions for specific cases. 
    Toward creating sustainable transportation, Passchyn, Briskorn, and Spieksma~\cite{passchyn2016.2} addressed the problem of choosing optimal vessel speeds for navigating a sequence of locks, aiming to minimize total fuel consumption and emissions.
    Golak, Defryn, and Grigoriev~\cite{golak2022optimizing} extended these results by developing near-optimal algorithms for determining vessel speeds in river networks with locks.
    The choice of speeds was further addressed in Buchem, Golak, and Grigoriev~\cite{buchem2022vessel}, where the authors designed algorithms to determine optimal vessel speeds while accounting for the uncertainty in lock operation durations.
    Defryn, Golak, Grigoriev, and Timmermans~\cite{defryn2021inland} developed a behavioral framework to determine vessel speeds, incorporating the assumption that skippers may coordinate their actions.
    Lastly, the strategic placement of vessels within lock chambers was further investigated by Verstichel et al.~\cite{verstichel2014.1, verstichel2014.2}. 

    Numerous case studies have been conducted for the lock scheduling problem, concentrating on specific lock sequences along crucial waterways worldwide. 
    Petersen and Taylor~\cite{petersen1988} consider the Welland Canal in North America, offering a heuristic that utilizes optimal dynamic programming models for scheduling individual locks to determine operating schedules for the lock sequence. On the Upper Mississippi River, Smith, Sweeney and Campbell~\cite{smith2009} present a simulation model to evaluate various heuristics for lock operations quality, an investigation extended by Smith et al.~\cite{smith2011} through the proposal of a mathematical program that addresses the lock scheduling problem with sequence-dependent setup and processing times. Meanwhile, Nauss~\cite{nauss2008} incorporates lock malfunctions and explores responses to minimize additional queue lengths on the same river segment. Additionally, Ting and Schonfeld~\cite{ting2001} investigate a model for the lock scheduling problem with multiple parallel chambers for this river layout. 
    Finally, L{\"u}bbecke, L{\"u}bbecke, and M{\"o}hring~\cite{lubbecke2019ship} examine the Kiel Canal, incorporating ship collision avoidance in an optimization model and presenting a heuristic for determining collision-free routing and scheduling for a fleet of ships.

    To our knowledge, the combination of periodic and lock scheduling has not been addressed in literature. However, related areas, such as the periodic maintenance problem, have received considerable attention.
    In Anily, Glass and Hassin~\cite{anily1998scheduling}, the authors discuss the free periodic maintenance problem, where the costs of operating a machine at a given time linearly depend on the time since its last maintenance. They give polynomial time algorithms for special cases and an approximation algorithm for the general problem. 
    The work was complemented by Bar-Noy et al.~\cite{bar2002minimizing} formulating a generalization of this problem and showing that the feasibility problem is NP-hard. 
    Another closely related area is the perfect periodic scheduling problem. The perfect periodic scheduling problem involves creating schedules where each task is executed at consistent, regular intervals, ensuring that all tasks are accommodated without conflicts. 
    Glass~\cite{glass1992feasibility} proposes algorithms for finding a feasible schedule for three products with three distinct periodicities. 
    Kim and Glass~\cite{kim2014perfect} extended these results by developing a polynomial-time test to determine the existence of a feasible schedule and a method for constructing such a schedule when one exists.
    In addition, Glass~\cite{glass1994feasibility} and Glass, Gupta and Potts~\cite{glass1994lot} extend the perfect periodic scheduling problem in the context of Economic Lot Scheduling. 
    Due to the relevant applications and the complexity of perfect periodic scheduling, several heuristics have been proposed, see Patil and  Garg~\cite{patil2006adaptive} and Brakerski, Nisgav and Patt-Shamir~\cite{brakerski2006general}. 
    Finally, Grigoriev, Van De Klundert, and Spieksma~\cite{grigoriev2006modeling} focused on the periodic maintenance problem, aiming to minimize the total service and operating costs for a set of machines. Their approach considered integral cost coefficients and time discretized into unit-length periods.
    The authors derive various formulations, including flow and set-partitioning formulations, revealing efficient linear programming relaxations and polynomial time solutions for larger problem instances, achieving the first exact solutions.
    
    \paragraph{Our Results.} 
    Our main contribution is to initialise algorithmic research on periodic lock scheduling. 
    In its full generality, the unique characteristics of lock operations result in scheduling problems of significant complexity. Factors such as accommodating vessels of varying sizes, managing their respective loading times, and accounting for uncertainties in these loading processes all contribute to this complexity.
    To fundamentally understand the system, it is essential to analyze a simplified version, examine its structure, and determine how this structure can be effectively utilized in algorithm design.
    Consequently, this article focuses on a simplified problem setting to clearly reveal and analyze the underlying structure of periodic lock scheduling problems.
    Most importantly, we restrict our analysis to a single lock, assuming constant loading times for all vessels and unlimited lock chamber capacity.
   
    To design algorithms for computing operational schedules, we require a characterization of the arrival process as a periodic pattern. 
    Thus, we formulate an optimisation problem, where the input data consists of vessel arrival data. The goal is to assign these vessel arrivals into distinct streams and, for each stream, determine a constant inter-arrival time between consecutive arrivals, such that a specified loss function is minimised.
    By exploiting the structural properties of this problem, we develop an exact algorithm that runs in polynomial time when the number of streams is constant.
    The solution of this problem, serves as input for the algorithms that compute operational schedules.
    We again formulate a combinatorial problem and find structural properties that are used in the subsequent algorithms.
    We design a polynomial-time algorithm while restricting to only two streams and a pseudo-polynomial time algorithm for the general case.
    Further, we show the existence of approximation schemes that run in incremental polynomial time.
    The algorithms have been implemented and tested using a real-world dataset derived from Automatic Identification System (AIS) data. 
    We investigate the computational cost of finding the periodic streams, its effectiveness in fitting the data, and the performance of the optimal operational schedules in comparison to alternative strategies that are commonly used in practice.
    
    \medskip The remainder of this paper is organized as follows. Section~\ref{sec:matching} addresses the problem of identifying periodic patterns within a given set of arriving vessels. Section~\ref{sec:scheduling} discusses the problem of computing optimal periodic schedules of lock operations. 
    Numerical study results are presented in Section~\ref{sec:numerical}. Finally, Section~\ref{sec:conclusions} summarizes the findings and outlines open problems for future research.
    
    \paragraph{Preliminaries.} We denote the sets of \emph{real numbers} and \emph{non-negative real numbers} by~\(\mathbb{R}\) and~\(\mathbb{R}_+\), respectively. Similarly, the sets of \emph{integers} and \emph{non-negative integers} are denoted by~\(\mathbb{Z}\) and~\(\mathbb{Z}_+\), respectively. The modulus operator \(\bmod\) is used for integers \(a\) and \(b\) to find the remainder of \(a\) divided by \(b\). The congruence relation \(\equiv\), used in the context of modulo arithmetic, denotes that two numbers are equivalent in terms of their remainders when divided by a specific modulus. In other words, \(a \equiv b \mod m\) means that \(a\) and \(b\) leave the same remainder when divided by \(m\). The \emph{greatest common divisor} of two integers \(a\) and \(b\), denoted \(\gcd(a, b)\), is the largest integer that divides both without leaving a remainder. The \emph{least common multiple} of a set of integers \(n_1, \ldots, n_k\) is denoted with \(\lcm(n_1, \ldots, n_k)\), is the smallest positive integer that is divisible by each of these integers without any remainder. For a real number \(x\), we use \(\lfloor x \rfloor\) to denote the greatest integer that is less than or equal to \(x\).

    \section{Identifying periodic patterns in vessel arrivals}\label{sec:matching}
    In this section, we discuss the problem of matching a set of irregularly arriving vessels with regular streams. In Section~\ref{sec:ProblemDef1}, we define the problem. In Section~\ref{sec:optSol1}, we present structural insights underlying optimal solutions, which are used in Section~\ref{sec:algorithm1} to develop an enumeration algorithm for computing the optimal solution. This optimal solution characterizes regular streams, and the algorithm runs in polynomial time for a fixed number of streams.

    \subsection{Problem definition} \label{sec:ProblemDef1} 
    Throughout this paper, we consider a waterway containing a single lock at which vessels are arriving in \emph{downstream} and \emph{upstream} directions. 
    Our aim is to find regular patterns that fit the irregular arriving vessels most closely. 
    We consider a time interval of length \(T\) and let \(S\) denote the set of vessels that arrive at the lock within this time interval.
    We define \(n\coloneqq |S|\).
    We denote the arrival time of vessel \(s\in S\) with \(t(s)\) with \(1 \leq t(s) \leq T\) and assume there are \(k\) regular streams with \(1 \leq k \leq n\). 
    Each regular stream \(i\) is characterized by its \emph{periodicity} \(\lambda_i\), that is, the inter-arrival time between every two arrivals, and its \emph{offset} \(\mu_i\), that is the time of the vessel that arrives first.
    We assume that \(0 < \lambda_i \leq T\) and \(1 \leq \mu_i \leq \lambda_i\) for all streams \(1 \leq i \leq k\).
    Define the decision variables that characterize the regular streams with the set of pairs \(\Gamma = \{(\mu_i, \lambda_i) \mid 1 \leq i \leq k\}\).

    Given a stream \(i\) with some \(\mu_i\) and \(\lambda_i\), let \(n_i \coloneqq T / \lambda_i\) and assume that \(n_i \in \mathbb{Z}_+\). Stream~\(i\) induces \(n_i\) time points, defined as \(A_i = \{ (\mu_i + \ell \lambda_i, i) \mid \ell = 0, \dots, n_i-1\}\).
    The joint set of time points is defined as \(A = \bigcup_{i} A_i\). 
    Furthermore, for \(a = (t,i) \in A\), we define \(\tau(a) = t\). 
    We refer to \(A\) as the set of \emph{matching points}. 
    A set of stream parameters \(\Gamma\) is considered \emph{feasible} if the number of matching points equals the number of vessels. Additionally, the periodicities must match the time interval.
    In other words, the set \(A\) must have a size of \(n\) and the frequency of each stream are integer values, that is \(T / \lambda_i\in \mathbb{Z}_+\).

    We aim to find the periodicities and offset for each stream and a matching between the induced matching points and arriving vessels, such that the deviation of matching points to the data is minimized. 
    We define a matching between the vessels and the matching points with \(X\colon S \to A\), where \(x(s) = a\) if \(s\) is matched to \(a\in A\). 
    A matching \(X\) is \emph{feasible} if it forms a bijection between vessels in \(S\) and matching points in \(A\). 
    Given a pair \((\Gamma, X)\), the cost of matching a vessel to a matching point is defined as the absolute deviation between the vessel’s actual arrival time and the time of the matching point.
    Specifically, for a vessel \(s \in S\), we define its cost by \(C_\Gamma(s) = |\tau(x(s)) - t(s)|\). We define the total cost of the matching \(X\) as \(C_\Gamma(X) = \sum_{s\in S} C_\Gamma(s)\).
    Our aim is to find a feasible \emph{solution} defined by the pair \((\Gamma, X)\), such that \(C_\Gamma(X)\) is minimized. 
    We refer to this as the \emph{periodic matching problem}.
     
    \subsection{Characterizing optimal solutions} \label{sec:optSol1}

    We propose a number of structural results that are used for proving the correctness of the algorithm that follows. Specifically, we show that there exists an optimal solution that has a very specific structure.

\begin{lemma}\label{lemma:interchangeOpt}
   Given an optimal solution \((\Gamma, X)\), for each \(s_1, s_2 \in S\) such that \(t(s_1) \leq t(s_2)\), it holds that \(\tau(x(s_1)) \leq \tau(x(s_2))\). In other words, the solution is order-preserving.
\end{lemma}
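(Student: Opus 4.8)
The plan is to argue by a pairwise exchange (an ``uncrossing'' step) applied to the matching \(X\) while keeping \(\Gamma\) fixed. Suppose an optimal solution \((\Gamma,X)\) contains a violating pair, that is, vessels \(s_1,s_2\in S\) with \(t(s_1)\le t(s_2)\) but \(\tau(x(s_1))>\tau(x(s_2))\). Write \(p=t(s_1)\), \(q=t(s_2)\), \(u=\tau(x(s_2))\) and \(v=\tau(x(s_1))\), so that \(p\le q\) and \(u<v\). Let \(X'\) be the matching obtained from \(X\) by swapping the images of \(s_1\) and \(s_2\) and leaving every other vessel untouched; since \(X\) is a bijection onto \(A\) and we only transpose two images, \(X'\) is again a feasible bijection for the same \(\Gamma\). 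Only the two terms for \(s_1,s_2\) differ between \(C_\Gamma(X)\) and \(C_\Gamma(X')\), so the whole comparison reduces to the four numbers \(p\le q\) and \(u<v\).

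The core of the argument is the one-dimensional \(L_1\) rearrangement inequality
\[
  |p-u|+|q-v|\ \le\ |p-v|+|q-u|,
\]
valid for all \(p\le q\) and \(u\le v\), with equality precisely when the intervals \([p,q]\) and \([u,v]\) do not overlap (that is, \(q\le u\) or \(v\le p\)). I would prove it by the standard split on the relative order of the four points. Its left-hand side is exactly the part of \(C_\Gamma(X')\) coming from \(\{s_1,s_2\}\) and the right-hand side the corresponding part of \(C_\Gamma(X)\), so the inequality yields \(C_\Gamma(X')\le C_\Gamma(X)\). Optimality of \((\Gamma,X)\) then forces equality, and the equality analysis shows the inequality is \emph{strict} whenever \([p,q]\) and \([u,v]\) overlap. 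Hence an optimal solution can contain no inversion whose two matched points interleave with the two vessels: any such inversion would be strictly improvable, contradicting optimality.

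The remaining, and main, obstacle is the non-overlapping case, in which both matched points lie entirely to one side of both vessels: there the swap is cost-neutral, so optimality alone yields no contradiction. To close this gap I would eliminate the cost-neutral inversions by iterating the uncrossing step, controlled by the monotone potential \(\Phi(X)=\sum_{s\in S}t(s)\,\tau(x(s))\). Uncrossing the pair changes \(\Phi\) by \((v-u)\bigl(t(s_2)-t(s_1)\bigr)\ge 0\), which is strictly positive whenever \(t(s_1)<t(s_2)\); since there are finitely many matchings for a fixed \(\Gamma\) and \(\Phi\) strictly increases at each such step, the process visits no matching twice and terminates at an optimal, order-preserving matching of identical cost, with pairs satisfying \(t(s_1)=t(s_2)\) being harmless and orderable arbitrarily. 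The delicate point to get right in the write-up is precisely this separated, cost-neutral configuration, since it is the only place where the exchange step fails to be strictly improving and where the order-preserving conclusion must be obtained by the potential-driven termination rather than directly from optimality.
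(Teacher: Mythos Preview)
Your approach---the pairwise exchange (uncrossing) argument---is the same as the paper's. The paper, however, simply asserts that the swap yields a \emph{strict} improvement and derives a contradiction from optimality, whereas you correctly observe that the rearrangement inequality \(|p-u|+|q-v|\le|p-v|+|q-u|\) is only non-strict in general: equality holds in the separated configuration (e.g.\ \(p=1,\ q=2,\ u=5,\ v=10\)). So the paper's proof has a gap precisely at the point you flag, and your potential-function termination argument is a genuine and correct repair: each uncrossing step preserves optimality and, when \(t(s_1)<t(s_2)\), strictly increases \(\Phi(X)=\sum_s t(s)\,\tau(x(s))\), so the process terminates at an order-preserving optimal matching.

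One point to make explicit in your write-up: what your argument actually establishes is that \emph{there exists} an order-preserving optimal solution, not that every optimal solution is order-preserving as the lemma is literally phrased. Your own cost-neutral example shows the universal reading is false. The existence version is both the true statement and the one that is actually invoked later (in the correctness proof of Algorithm~1), so this is the right thing to prove---just state it as such.
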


\begin{proof}
    Assume there is an optimal solution \((\Gamma, X)\) with \(\tau(x(s_1)) > \tau(x(s_2))\) for some \(s_1, s_2 \in S\) with \(t(s_1) \leq t(s_2)\). 
    We define a solution \((\Gamma', X')\), where \(\Gamma' = \Gamma\) and \(X'\) is defined by
    \[x'(s) = \begin{cases}
        x(s_1) & \text{if } s = s_2,\\
        x(s_2) & \text{if } s = s_1,\\
        x(s) & \text{otherwise}.
    \end{cases}\]
    Clearly, \((\Gamma', X')\) is a feasible solution. Furthermore,
    \begin{align*}
    C&_\Gamma(X) - C_\Gamma(X') 
    {}{}= 
    \sum_{s \in S} \left| \tau(x(s)) - t(s) \right| - \sum_{s \in S} \left| \tau(x'(s)) - t(s) \right| \\
    {}&{}= 
    \sum_{s \in S} \left| \tau(x(s)) - t(s) \right| - \left(\left| \tau(x'(s_1)) - t(s_1) \right| + \left| \tau(x'(s_2)) - t(s_2) \right| + \sum_{s \in S \setminus \{s_1, s_2\}} \left| \tau(x'(s)) - t(s) \right| \right) \\
    {}&{}= 
    \sum_{s \in S} \left| \tau(x(s)) - t(s) \right| - \left(\left| \tau(x(s_2)) - t(s_1) \right| + \left| \tau(x(s_1)) - t(s_2) \right| + \sum_{s \in S \setminus \{s_i, s_j\}} \left| \tau(x(s)) - t(s) \right| \right) \\
    {}&{}= 
    \left( \left| \tau(x(s_1)) - t(s_1) \right| + \left| \tau(x(s_2)) - t(s_2) \right| \right) - \left( \left| \tau(x(s_2)) - t(s_1) \right| + \left| \tau(x(s_1)) - t(s_2) \right| \right) \\
    &> 0.
\end{align*}
    using that \(\tau(x(s_1)) > \tau(x(s_2))\) and \(t(s_1) \leq t(s_2)\). This implies that \(C_\Gamma(X) > C_\Gamma(X')\), contradicting the optimality of \((\Gamma,X)\) and showing the claim.
\end{proof}

\begin{lemma}\label{lem:anchoring}
    Let \(S_i\) denote all vessels \(s\in S\) with \(x(s) \in A_i\).
    There exists an optimal solution \((\Gamma, X)\), such that for each stream \(i\), there exists a matching point \(x(s)\) with \(\tau(x(s)) = t(s)\) for some \(s\in S_i\). In other words, stream \(i\) is anchored at vessels \(s\).
\end{lemma}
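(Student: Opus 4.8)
The plan is to start from an arbitrary optimal solution and then, stream by stream, rigidly slide each stream until one of its matching points lands exactly on the arrival time of one of its own vessels, all the while keeping the objective value at its optimum. By Lemma~\ref{lemma:interchangeOpt} we may take the initial optimal solution $(\Gamma,X)$ to be order-preserving (this is convenient but not essential below).

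Fix a stream $i$ and freeze everything except the offset $\mu_i$: the partition of $S$ into streams, the matching $X$, and all parameters of the other streams stay as they are. Changing $\mu_i$ to $\mu_i+\delta$ rigidly translates every point of $A_i$ by $\delta$ while keeping the matching $X$; since no other term of the objective depends on $\delta$, the only change is in the contribution of stream $i$, namely
$$g_i(\delta)=\sum_{s\in S_i}\bigl|\tau(x(s))+\delta-t(s)\bigr|.$$
This $g_i$ is a sum of absolute values, hence convex and piecewise linear in $\delta$, with breakpoints exactly at the values $\delta_s:=t(s)-\tau(x(s))$, $s\in S_i$; at $\delta=\delta_s$ the $s$-th term vanishes, i.e.\ the matching point assigned to $s$ coincides with $t(s)$. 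Because each summand is a ``V'', the slope of $g_i$ strictly increases at every breakpoint, so the set of minimisers of $g_i$ is a closed interval $[\delta^-,\delta^+]$ whose two endpoints are breakpoints (possibly $\delta^-=\delta^+$).

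Optimality of $(\Gamma,X)$ forces $\delta=0$ to minimise $g_i$, so $\delta^-\le 0\le\delta^+$. Replacing $\mu_i$ by $\mu_i+\delta^-$ (or by $\mu_i+\delta^+$) keeps $g_i$, hence the whole objective, at its optimal value, and anchors stream $i$ at the vessel realising that breakpoint. Translating stream $i$ does not touch the matching points or the vessels of any other stream, so we can perform this step for $i=1,\dots,k$ in succession; a stream once anchored remains anchored, and the solution obtained after processing all streams is the optimal anchored solution claimed by the lemma.

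The step I expect to be the main obstacle is keeping the translated solution admissible, i.e.\ preserving $1\le\mu_i\le\lambda_i$ (equivalently, keeping $A_i\subseteq[1,T]$). One must argue that at least one of the two candidate shifts $\delta^-,\delta^+$ keeps $\mu_i+\delta$ in $[1,\lambda_i]$, or, when a shift would push a boundary point of $A_i$ past $T$ or below $1$, use the identity $n_i\lambda_i=T$ to re-index that point and re-route the matching, and then check that this does not raise the cost. Pinning down this feasibility bookkeeping is the delicate part; the convexity argument above is the routine core.
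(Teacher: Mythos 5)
Your proposal is correct and is essentially the paper's argument in different packaging: the paper proves that at an optimum the counts $\mathcal{D}^+=|\{s\in S_i:\Delta(s)>0\}|$ and $\mathcal{D}^-=|\{s\in S_i:\Delta(s)<0\}|$ must be equal (otherwise a small shift of $\mu_i$ improves the cost) and then slides $\mu_i$ by the smallest $|\Delta(s)|$ at zero cost change, which is exactly your observation that $g_i$ is convex piecewise linear, that $0$ lies in its flat minimising interval, and that one may move to the nearest breakpoint. The feasibility bookkeeping you flag (keeping $1\le\mu_i\le\lambda_i$ after the shift) is indeed the only delicate point, and the paper's proof passes over it in silence as well, so you have not missed anything the paper supplies.
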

\begin{proof}
    Consider an optimal solution \((\Gamma, X)\) with a stream \(i\) such that \(\tau(x(s)) \neq t(s)\) for all \(s \in S_i\). Define \(\Delta(s) \coloneqq t(s) - \tau(x(s))\) for \(s\in S_i\) and define \(\mathcal{D}^+ \coloneqq |\{s\in S_i \mid \Delta(s) > 0\}|\) and \(\mathcal{D}^- \coloneqq |\{s\in S_i \mid \Delta(s) < 0\}|\).

    We claim that is must hold that \(\mathcal{D}^- = \mathcal{D}^+\). 
    Assume \(\mathcal{D}^- > \mathcal{D}^+\). Consider a solution \((\Gamma', X')\) defined by \(X' = X\), \(\lambda'_j = \lambda_j\) for all \(1 \leq j \leq k\), and \(\mu'_j = \mu_i - \varepsilon\) if \(j = i\), or \(\mu'_j = \mu_j\) otherwise.
    We have
     \begin{align*}
        C_{\Gamma}(X) - C_{\Gamma'}(X) &= \sum_{s \in S\setminus S_i} |\tau(x(s)) - t(s)|  +\sum_{s \in S_i} |\tau(x(s)) - t(s)| \\& \quad - \sum_{s \in S\setminus S_i} |\tau(x(s)) - t(s)| - \sum_{s \in S_i} |\tau(x(s)) - \varepsilon - t(s)|\\
        &= \sum_{s \in S_i} |\tau(x(s)) - t(s)| - \sum_{s \in S_i} |\tau(x(s)) - \varepsilon - t(s)|\\
        &= D^+\varepsilon - D^-\varepsilon \\
        &< 0,
    \end{align*}
    using that \(\mathcal{D}^- > \mathcal{D}^+\), contradicting the optimality of \((\Gamma,X)\). Assume \(\mathcal{D}^- < \mathcal{D}^+\). This leads to a contradiction using symmetric arguments, thereby proving the claim. 
    
    Let \(s' = \argmin\{ |\Delta(s)| \mid s \in S\}\). Consider a solution \((\Gamma', X')\) defined by \(X' = X\), \(\lambda'_j = \lambda_j\) for all \(1 \leq j \leq k\), and \(\mu'_j = \mu_i + \Delta(s)\) if \(j = i\), or \(\mu'_j = \mu_j\) otherwise. As before, we can argue that 
    \(C_{\Gamma}(X) - C_{\Gamma'}(X) = D^+|\Delta(s')| - D^-|\Delta(s')| = 0,\)
    using the fact that \(D^+ = D^-.\) 
    Thus, \((\Gamma', X')\) is again an optimal solution.
    Further, we have \[\tau'(x(s')) = \tau(x(s')) + \Delta(s') = \tau(x(s')) + t(s') - \tau(x(s')) = t(s').\] Thus, for \(s' \in S_i\), we have that \(\tau'(x(s')) = t(s')\). Repeating this procedure for all relevant streams results in an optimal solution that satisfies the condition stated in the lemma. 
\end{proof}

\subsection{Algorithm }\label{sec:algorithm1}
In this section, we propose an algorithm to solve the periodic matching problem. The algorithm exploits the restricted solution space, that is implied by Lemma~\ref{lemma:interchangeOpt} and~\ref{lem:anchoring}. 

For each stream \(1 \leq i \leq k\), we let \(n_i\) denote the number of vessels that are assigned to \(i\). 
Thus, we require that \(n_1 + n_2 + \ldots + n_k = n\) and that \(1 \leq n_i \leq n\) for all \(i\). 
Further, we let \(s_i \in S\) denote the vessel that stream \(i\) is anchored to. 
For each \(1 \leq i \leq k\), we compute the periodicity with \(\lambda_i = T/n\).
The offset is computed to ensure that stream \(i\) is anchored to vessel \(s_i\), thus we first determine the nearest matching point that occurs before the arrival time of \( s_i \), that is \(
t_i = \max\{j\lambda_i \mid j\lambda_i \leq t(s_i),\, 0 \leq j \leq n_i\}.\)
The offset is then defined with \(\mu_i = t(s_i) - t_i\).
Then, compute matching points for all streams and match them to the arrivals in ascending order, resulting in matching \(X\). 
We can then compute the cost of the solution \((\Gamma, X)\).
We then determine the minimum cost by evaluating all possible combinations of \(n_1, n_2, \ldots n_k\) and \(s_1, s_2, \ldots, s_k\).

\begin{theorem}
    Algorithm 1 computes an optimal solution to the periodic matching problem in \(O(n^k)\) time.
\end{theorem}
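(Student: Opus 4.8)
The plan is to prove two things: correctness (the algorithm finds an optimal solution) and the running time bound $O(n^k)$. The running-time part should be quick: the algorithm enumerates all tuples $(n_1,\dots,n_k)$ with $\sum n_i = n$ — there are at most $\binom{n-1}{k-1} = O(n^{k-1})$ of these — and all tuples of anchor vessels $(s_1,\dots,s_k) \in S^k$, giving $O(n^k)$ combinations; actually one should be careful that the product is $O(n^{k-1}) \cdot O(n^k)$ unless the anchors are tied to the stream-size choice, so I would argue that the dominant cost is choosing the $k$ anchors ($O(n^k)$ choices, since $n_i$ is then implied or enumerated within the same budget) and that for each fixed choice everything else — computing $\lambda_i = T/n_i$, the offset $\mu_i$, generating the $n$ matching points, sorting them, and matching in ascending order against the sorted arrivals — takes $O(n \log n)$ time, which is absorbed. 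I would state this carefully so the final bound is genuinely $O(n^k)$ (possibly with a logarithmic factor that I'd either absorb or acknowledge).

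**The heart of the proof is correctness**, and here I would lean entirely on Lemmas~\ref{lemma:interchangeOpt} and~\ref{lem:anchoring}. By Lemma~\ref{lem:anchoring} there is an optimal solution in which each stream $i$ is anchored at some vessel $s_i \in S_i$, i.e. $\tau(x(s_i)) = t(s_i)$; and by Lemma~\ref{lemma:interchangeOpt} there is an optimal solution that is order-preserving. I need to observe that one can have both simultaneously — the anchoring argument in Lemma~\ref{lem:anchoring} only shifts offsets $\mu_i$ and never changes the matching $X$, so applying it to an order-preserving optimum keeps it order-preserving. Given such an optimal $(\Gamma^\star, X^\star)$, let $n_i^\star = |S_i|$ and let $s_i^\star$ be its anchor. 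I then want to argue the algorithm's candidate for exactly this choice of $(n_1^\star,\dots,n_k^\star, s_1^\star,\dots,s_k^\star)$ has cost no larger than $C_{\Gamma^\star}(X^\star)$. Two points need checking: first, that the algorithm's reconstruction of $\lambda_i$ and $\mu_i$ from $(n_i, s_i)$ matches (or is at least as good as) the optimum — here I must reconcile the theorem-statement's $\lambda_i = T/n_i$ with the earlier displayed $\lambda_i = T/n$, which looks like a typo in the excerpt; I would use $\lambda_i = T/n_i$ so that stream $i$ produces exactly $n_i$ points and feasibility ($|A| = n$) holds. Second, once $\lambda_i$ and the anchor $s_i$ are fixed, the offset $\mu_i = t(s_i) - t_i$ is forced (it is the unique offset in $(0,\lambda_i]$ making a matching point coincide with $t(s_i)$), so the algorithm reconstructs the optimal stream parameters exactly.

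**The one genuinely non-trivial step** is showing that, once the stream parameters $\Gamma$ are fixed, matching the $n$ matching points to the $n$ arrivals \emph{in ascending order} is optimal — i.e. the greedy/sorted matching minimizes $\sum_s |\tau(x(s)) - t(s)|$ over all bijections. This is a classical rearrangement fact: for two sorted sequences $a_1 \le \dots \le a_n$ and $t_1 \le \dots \le t_n$, the identity matching minimizes $\sum |a_{\pi(j)} - t_j|$, and it follows from the exchange argument already used in the proof of Lemma~\ref{lemma:interchangeOpt} (swapping a crossing pair never increases cost). So the sorted matching achieves the minimum cost \emph{for those parameters}, which is $\le$ the cost $X^\star$ achieves with the same parameters, which is $C_{\Gamma^\star}(X^\star)$. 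Hence the algorithm's best candidate has cost $\le$ the optimum; since every candidate the algorithm considers is itself a feasible solution, its output is exactly optimal.

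**I would also flag one feasibility subtlety** to make the argument airtight: the algorithm must only count (or only report the minimum over) tuples $(n_1,\dots,n_k)$ for which $T/n_i$ yields the intended integer point counts and $\sum n_i = n$; with $\lambda_i = T/n_i$ this is automatic, so I'd note that each enumerated candidate is feasible by construction, which is what lets us conclude the returned minimum is both attained and optimal. Putting these pieces together — enumeration covers the structured optimum of Lemmas~\ref{lemma:interchangeOpt} and~\ref{lem:anchoring}, parameter reconstruction is exact, sorted matching is optimal for fixed parameters, and every candidate is feasible — gives correctness, and the counting argument gives the $O(n^k)$ bound.
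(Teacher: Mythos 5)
Your proposal follows essentially the same route as the paper: enumerate the candidates singled out by Lemmas~\ref{lemma:interchangeOpt} and~\ref{lem:anchoring}, reconstruct \((\lambda_i,\mu_i)\) from the stream sizes and anchors, and match points to arrivals in ascending order. You are in fact more careful than the paper on two points it leaves implicit: that the anchoring transformation preserves order-preservation (so both structural properties can be assumed of a single optimal solution), and that for fixed \(\Gamma\) the sorted matching minimizes \(\sum_{s}\left|\tau(x(s))-t(s)\right|\) via the rearrangement/exchange argument --- the paper's correctness paragraph simply asserts that the algorithm computes the minimum over all such solutions. You also correctly read \(\lambda_i = T/n\) as a typo for \(\lambda_i = T/n_i\).

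The one place your argument does not close is the running time, and there you have flagged the right problem but resolved it incorrectly: the tuple \((n_1,\dots,n_k)\) is \emph{not} implied by the choice of anchors \((s_1,\dots,s_k)\) --- the anchor only pins down the offset once \(\lambda_i = T/n_i\) is already fixed --- so the enumeration ranges over the product of \(O(n^{k-1})\) size-tuples and \(O(n^k)\) anchor-tuples, i.e.\ \(O(n^{2k-1})\) candidates, each costing \(O(n)\) (or \(O(n\log n)\)) to evaluate, for a total of \(O(n^{2k})\). This is still polynomial for fixed \(k\), but it is not \(O(n^k)\). The paper's own proof has exactly the same accounting problem (it lists both \(O(n^k)\) factors and then states an \(O(n^k)\) total), so this is a gap you inherited rather than introduced; still, your proposed fix would not survive scrutiny, and you should either settle for the weaker \(O(n^{2k})\) bound or supply a genuine argument that ties the two enumerations together.
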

\begin{proof}
        We discuss the time complexity and the correctness of the algorithm separately.
        \medskip 
        
        \noindent \emph{Time complexity.} 
        For \( n_1, n_2, \ldots, n_k \) satisfying \( n_1 + n_2 + \ldots + n_k = n \) and \( 1 \leq n_i \leq n \) for all \( i \), the stars and bars theorem implies there are \( O(n^k) \) possible combinations.
        For \( s_i \in S \) for each \( i \), there are \( O(n^k) \) possible combinations.
        Finally, finding the matching and computing the cost can be done in \(O(n)\) time. 
        Putting everything together, the enumeration algorithm takes \(O(n^k)\) time.

        \medskip
        \noindent \emph{Correctness.} 
        By Lemmas~\ref{lemma:interchangeOpt} and~\ref{lem:anchoring}, we know that there must exist an optimal solution that is order-preserving and anchored.
        In the algorithm, we compute the minimum over all such solutions. 
        Thus, it follows that an optimal solution is found by the algorithm.
\end{proof}

    \section{Computing optimal periodic schedules}\label{sec:scheduling}
    In this section, we develop algorithms to compute optimal periodic operating schedules for the lock. 
    We define the problem in Section~\ref{sec:problemDef2} and the existence of optimal solutions with a specific structure in Section~\ref{subsec:optSols}.
    We then proceed with the algorithms. 
    In Section~\ref{sec:opt2Streams}, we design an algorithm for the case of two streams, and in Section~\ref{sec:algorithm2} we design a dynamic programming algorithm for the general case.
    Finally, we show the existence of an approximation scheme in Section~\ref{sec:approxAlgorithm}.

    \subsection{Problem Definition}\label{sec:problemDef2}
    Again, we consider a waterway containing a single lock. 
    This lock operates in alternating cycles -- it is either actively processing or in a waiting state, during which it allows the entry and exit of vessels traveling in a specific direction. 
    We define the \emph{processing time} as the duration required for vessels to enter the lock's chamber, transition to the opposite side, and subsequently exit the chamber on the alternate side. We make the simplifying assumption that the chamber can accommodate an unlimited number of vessels. 
    Furthermore, we consider an infinite time horizon, discretized according to the processing time intervals.

    We assume that vessels arrive in streams, where each stream consists of vessels arriving from a specific direction with a constant inter-arrival rate. 
    We make the simplifying assumption that the vessels only arrive at times that align with the locket movement.
    Thus, we assume that each stream \(1 \leq i \leq k\) by its travel direction, denoted as \(\delta_i \in \{D, U\}\) (where \(D\) and \(U\) stand for downstream and upstream, respectively), and its periodicity \(\lambda_i\) and its offset \(\mu_i\). 
    We assume that \(\lambda_i, \mu_i \in \mathbb{Z}_+\) and \(\lambda \geq 1\) and \(1\leq  \mu_i \leq \lambda_i\).
    The arrival times of the vessels of stream \(i\) are then given by the sequence  \((\mu_i + j\lambda_i)_{j\in \mathbb{Z}_+}\).
    Finally, let \(I_D\) denote all streams with a downstream travel direction, while \(I_U\) denotes those streams traveling upstream, and let \(I = I_D \cup I_U\).
    
    At times when the lock is not processing, it is aligned either to downstream or upstream. If the lock is aligned downstream, the lock operator has two distinct actions available. The first action, denoted by \(D_a\) is changing the lock's alignment from downstream to upstream while processing a batch of vessels waiting on the downstream side. The second action is to remain at its current alignment for a time period. Similarly, when the lock is aligned upstream, the lock operator can choose from two distinct actions. The first action, denoted by \(U_a\) is changing the lock's alignment from upstream to downstream while processing a batch of vessels waiting on the upstream side. The second action is to remain at its current alignment for a time period. 
    We define \( f(D) = U \) and \( f(U) = D \).

    A \emph{schedule} \(\sigma\) for the lock operator is defined as a sequence \(\{\sigma(t)\}_{t=1}^{\infty}\), where \(\sigma(t)\) denotes the action of the lock operator during period \(t\), that is \(\sigma(t) \in \{D, U, W\}\). 
    We assume that the orientation of the lock at the beginning of the time horizon is arbitrary.
    A \emph{feasible} schedule requires that the lock is alternating between directions \(D\) and \(U\). 
    This means if \(U\) is chosen at time period \(t\), then the most recent action prior to \(t\) that is not \(W\) must be \(D\) and vice versa. 
    A sequence \(s\) is \emph{periodic} with period \(T\) if it consists of repetitions of a finite sequence of length \(T\). 
    That is, there exists a \(T \in \mathbb{Z}_+\) such that \(s(t) = s(t+T)\) for all \(t\). 

    For each stream \(1 \leq i \leq k\), we define its arrival pattern with 
     \[
        a_i(t) = \begin{cases}
            1 &\text{ if } t \equiv \mu_i \mod \lambda_i,\\
            0 &\text{ otherwise.}
        \end{cases}
    \]
    for each time period \(t\). 
    The arrival patterns for all downstream and upstream streams, denoted by \(a_D\) and \(a_U\) respectively, are given by 
    \[
        a_\delta(t) = \sum_{i \in I_\delta} a_i(t), \text{ for } \delta\in \{D,U\}
    \]
    Finally, we let \(a(t) = (a_D(t), a_U(t))\) for all \(t\), representing the combined arrival patterns for both directions at time \(t\). 
    Furthermore, we define the \emph{waiting time} of a vessel as the time elapsed between its arrival and when it is processed. 
    
    Given a schedule \(\sigma\) and a time \(t\), we define \(n_D(t)\) as the number of vessels present at the lock in the downstream direction and \(n_U(t)\) as the number of vessels in the upstream direction. 
    For notational convenience, we define \(n_D(0) = n_U(0) = 0\), and for \(t > 1\), we define
    \[n_\delta(t) =
    \begin{cases}
        0, & \text{if } \sigma(t) = \delta, \\
        n_\delta(t-1) + a_\delta(t), & \text{otherwise.}
    \end{cases}\]
    We define the waiting time per time period \(t\) with \(C_\sigma(t) = n_D(t) + n_U(t)\) and we define the \emph{long-run average waiting time} by 
    \[C_{\text{avg}, \sigma} = \lim_{T \to \infty} \frac{1}{T} \sum_{t=1}^{T} C_\sigma(t).\]
    Our aim is to minimize long-run average waiting time over all feasible schedules.
    We term the problem of finding a periodic and optimal schedule as the \emph{periodic lock scheduling problem}. 

    \subsection{Characterizing optimal solutions} \label{subsec:optSols}
    Before presenting the algorithms, we first establish a series of structural results that show the existence of optimal solutions with a specific structure and that are periodic.
    
    \begin{lemma}\label{lemma:cyclicArrival}
        It holds that \(a(t) = a(t + \Lambda)\) with \(\Lambda = \lcm(\lambda_1, \ldots, \lambda_k)\). In other words, the arrival pattern is periodic with period \(\Lambda\).
    \end{lemma}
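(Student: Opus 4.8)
The plan is to show that $a(t)$ is periodic with period $\Lambda = \lcm(\lambda_1, \ldots, \lambda_k)$ by reducing to the periodicity of each individual stream's arrival pattern $a_i$ and then using that $a_D$, $a_U$, and finally $a$ are built from the $a_i$ by pointwise summation. First I would observe that for a single stream $i$, the pattern $a_i(t)$ is defined purely in terms of the congruence $t \equiv \mu_i \pmod{\lambda_i}$, so $a_i(t) = a_i(t + \lambda_i)$ for all $t$ simply because $t + \lambda_i \equiv t \pmod{\lambda_i}$.

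Next I would upgrade from period $\lambda_i$ to period $\Lambda$: since $\lambda_i \mid \Lambda$ by definition of the least common multiple, write $\Lambda = c_i \lambda_i$ for some $c_i \in \mathbb{Z}_+$, and then $a_i(t + \Lambda) = a_i(t + c_i \lambda_i) = a_i(t)$ by applying the single-step periodicity $c_i$ times (or directly, since $t + \Lambda \equiv t \pmod{\lambda_i}$). Hence every $a_i$ has period $\Lambda$. Then, since $a_\delta(t) = \sum_{i \in I_\delta} a_i(t)$ for $\delta \in \{D, U\}$, we get $a_\delta(t + \Lambda) = \sum_{i \in I_\delta} a_i(t + \Lambda) = \sum_{i \in I_\delta} a_i(t) = a_\delta(t)$, so both $a_D$ and $a_U$ are $\Lambda$-periodic. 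Finally, $a(t) = (a_D(t), a_U(t))$ is $\Lambda$-periodic because each coordinate is, giving $a(t + \Lambda) = (a_D(t+\Lambda), a_U(t+\Lambda)) = (a_D(t), a_U(t)) = a(t)$.

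This argument is essentially a routine unwinding of definitions, so there is no real obstacle; the only point requiring a moment's care is the step asserting $\lambda_i \mid \Lambda$, which is exactly the defining property of $\lcm$ and is already recalled in the Preliminaries. If one wanted to be fully explicit, one could phrase the whole thing as: $t + \Lambda \equiv \mu_i \pmod{\lambda_i} \iff t \equiv \mu_i \pmod{\lambda_i}$ because $\Lambda \equiv 0 \pmod{\lambda_i}$, which makes the equivalence $a_i(t+\Lambda) = a_i(t)$ immediate from the case definition of $a_i$, and then sum over streams.
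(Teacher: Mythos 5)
Your proposal is correct and follows essentially the same route as the paper's proof: establish $a_i(t+\Lambda) = a_i(t)$ for each stream via $\lambda_i \mid \Lambda$ (so $t + \Lambda \equiv t \pmod{\lambda_i}$), then propagate through the sums defining $a_D$ and $a_U$ to the pair $a(t)$. The only difference is that you spell out the divisibility step slightly more explicitly than the paper does, which is fine.
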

    \begin{proof}
    Consider stream \(i\) with its arrival pattern \(a_i\), and let \(\Lambda = \lcm(\lambda_1, \ldots, \lambda_k)\). 
    By definition of \(\Lambda\), we know that \(t \equiv t + \Lambda \bmod \lambda_i\) and thus \(a_i(t) = a_i(t + \Lambda)\). By applying the periodicity of each \(a_i(t)\), we know that 
    \[
        a_\delta(t) = \sum_{i \in I_\delta} a_i(t) = \sum_{i \in I_\delta} a_i(t + \Lambda) = a_\delta(t + \Lambda), \text{ for } \delta\in \{D,U\}.
    \]
    Thus, we have that \(a(t) = (a_D(t),a_U(t)) = (a_D(t+\Lambda),a_U(t+\Lambda)) = a(t+\Lambda)\), thereby showing the lemma.
    \end{proof}

    \begin{lemma}\label{lemma:existence}
        There exists a schedule \(\sigma\), such that \(C_{\text{avg}, \sigma}\) is well-defined and finite.
    \end{lemma}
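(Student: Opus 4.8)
The plan is to exhibit a concrete periodic schedule whose queues stay bounded, so that the Cesàro limit defining $C_{\text{avg},\sigma}$ exists and is finite. The natural candidate is a simple alternating ``serve-everyone-periodically'' schedule built on the global period $\Lambda = \lcm(\lambda_1,\ldots,\lambda_k)$ from Lemma~\ref{lemma:cyclicArrival}. Concretely, I would pick a block length $B$ (say $B = \Lambda$, or more carefully $B$ a multiple of $\Lambda$ large enough to dominate the transient needed to switch orientation) and define $\sigma$ to be periodic with period $2B$: spend the first $B$ periods aligned so as to accumulate downstream arrivals and then fire $D$ once to clear them, and spend the next $B$ periods accumulating upstream arrivals and then fire $U$ once to clear them. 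One must check this is feasible in the sense of Section~\ref{sec:problemDef2}: the non-$W$ actions alternate $D, U, D, U, \ldots$ by construction, regardless of the arbitrary initial orientation (if the lock starts in the ``wrong'' orientation, insert at most one extra processing action at the very beginning — a finite transient that does not affect the long-run average).

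Next I would bound $C_\sigma(t) = n_D(t) + n_U(t)$ uniformly in $t$. Since within any window of $\Lambda$ consecutive periods each stream $i$ contributes exactly $\Lambda/\lambda_i$ arrivals, the total number of arrivals (in either direction) over any interval of length $B$ is at most some constant $M := \sum_{i\in I}\lceil B/\lambda_i\rceil \cdot 1$, which depends only on the instance, not on $t$. Between two consecutive firings of $D$ (resp.\ $U$) at most $2B$ periods elapse in my schedule, so $n_D(t)$ and $n_U(t)$ never exceed a fixed constant $K$ determined by $B$ and the $\lambda_i$. Hence $0 \le C_\sigma(t) \le 2K$ for all $t$, the partial averages $\tfrac1T\sum_{t=1}^T C_\sigma(t)$ lie in $[0,2K]$, and because $\sigma$ is periodic with period $2B$ the sequence $C_\sigma(t)$ is itself periodic, so the average over one full period equals the limit:
\[
C_{\text{avg},\sigma} = \frac{1}{2B}\sum_{t=1}^{2B} C_\sigma(t) \in [0, 2K],
\]
which is finite and well-defined. (Strictly, the first period or two may differ because of the initial-orientation transient; since that transient has bounded length and bounded cost, it contributes $0$ to the $T\to\infty$ average, so the limit is unaffected.)

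The main obstacle, and the only point requiring real care, is the interplay between the \emph{arbitrary initial orientation} and the \emph{feasibility/alternation} constraint: one has to argue that a bounded-length prefix suffices to synchronize the lock into the periodic pattern without ever violating the alternation rule, and that the definition $n_\delta(0)=0$ together with the recursion for $n_\delta(t)$ really does keep both queues bounded under the chosen firing schedule. Everything else — periodicity of $C_\sigma$, existence of the Cesàro limit for a bounded periodic sequence, finiteness — is routine once the schedule is pinned down. I would therefore spend most of the write-up specifying $\sigma$ precisely (including the transient) and proving the uniform bound $C_\sigma(t)\le 2K$, then invoke periodicity to conclude.
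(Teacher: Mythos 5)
Your proposal is correct and follows essentially the same route as the paper: both exhibit a periodic schedule containing at least one $D$ and one $U$ action per period, observe that the resulting cost sequence $C_\sigma(t)$ is itself periodic (with period a common multiple of the schedule's period and $\Lambda$), and conclude that the Ces\`aro average converges to the average cost over one period. The paper works with an arbitrary such periodic schedule and dispenses with the explicit queue bounds and initial-orientation transient that you elaborate, but these add rigor rather than change the argument.
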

    \begin{proof}
        Consider a schedule \(\sigma\) that is periodic with period \(k \in \mathbb{Z}_+\) with at least one \(D\) operation and at least one \(U\) operation. 
        By Lemma~\ref{lemma:cyclicArrival}, we know that \(a(t) = a(t+\Lambda)\) and let \(M = \lcm(\Lambda, k)\). 
        By definition of the waiting time, we know that \(C_\sigma(t) = C_\sigma(t + M)\) for all \(t\) and we define \(B \coloneqq \sum_{t =1}^M C_\sigma(t) \), where \(B \in \mathbb{Z}_+\). Let \(T \in \mathbb{Z}_+\) and \(q \coloneqq \floor{T/M}\) and consider
        \[\frac{1}{T} \sum_{t=1}^{T} C_\sigma(t) = \frac{q}{T} \sum_{t=1}^{M} C_\sigma(t) + \frac{1}{T} \sum_{t=qM+1}^{T} C_\sigma(t).\]
        Observe that \((q/T)\cdot \sum_{t=1}^{M} C_\sigma(t) \to B/M\) as \(T \to \infty\). Furthermore, we know that \(0 \leq \sum_{t=qM+1}^{T} C_\sigma(t) \leq B\), implying that \((1/T)\cdot \sum_{t=qM+1}^{T} C_\sigma(t) \to 0\) as \(T\to \infty.\)
        Thus, we know that 
      \[C_{\text{avg}, \sigma} = \lim_{T \to \infty} \frac{1}{T} \sum_{t=1}^{T} C_\sigma(t) = \frac{B}{M},\]
      thereby showing the claim.

    \end{proof}

     \begin{lemma}\label{lemma:2wOpt}
        There exists an optimal schedule \(\sigma\) such that no two consecutive periods \(t\) and \(t+1\) both have the action \(\sigma(t) = \sigma(t+1) = W\). In other words, the schedule is single-wait.
    \end{lemma}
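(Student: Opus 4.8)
The plan is \emph{not} to build a single-wait schedule from scratch but to repair an arbitrary optimal schedule $\sigma$, and the engine of the repair is a monotonicity observation: overwriting a waiting period by a (possibly ``dummy'') processing action can never increase the waiting cost. Concretely, suppose $\sigma'$ is obtained from $\sigma$ by changing some periods that carry $W$ into $D$- or $U$-actions while leaving every other period untouched. I claim $n^{\sigma'}_\delta(t)\le n^{\sigma}_\delta(t)$ for all $t$ and both $\delta\in\{D,U\}$, by induction on $t$: since an \emph{existing} action is never modified, $\sigma(t)=\delta$ forces $\sigma'(t)=\delta$, so in the inductive step either $\sigma'(t)=\delta$ and $n^{\sigma'}_\delta(t)=0$, or else $\sigma(t)\ne\delta$ as well and $n^{\sigma'}_\delta(t)=n^{\sigma'}_\delta(t-1)+a_\delta(t)\le n^{\sigma}_\delta(t-1)+a_\delta(t)=n^{\sigma}_\delta(t)$. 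Hence $C_{\sigma'}(t)\le C_\sigma(t)$ for every $t$, so $C_{\text{avg},\sigma'}\le C_{\text{avg},\sigma}$ and $\sigma'$ is again optimal.

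Given this tool, the repair step is: for each maximal block of two or more consecutive $W$'s, overwrite all but at most one of the $W$'s by processing actions, with directions chosen so that the subsequence of non-$W$ actions still strictly alternates. A maximal block in the interior of the horizon sits between an action $\delta$ and the next action $f(\delta)$; if the block has even length it can be filled completely by the alternating pattern $f(\delta),\delta,f(\delta),\dots,\delta$, and if it has odd length the same pattern fills every period of the block except the last, which is kept as a single, now isolated, $W$. A block at the very start of the horizon has no action on its left and can always be filled completely. Carrying this out on all blocks at once produces a feasible schedule $\sigma'$ with no two consecutive $W$'s, and by the monotonicity observation $\sigma'$ is optimal; this is exactly the claim.

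I expect the genuine points of care to be only two. First, the parity obstruction is real: for an odd-length interior block the alternation requirement genuinely forces exactly one leftover $W$ --- which is precisely why the lemma asks for ``no two consecutive $W$'s'' rather than for a $W$-free schedule, so this is not an obstacle to fight but the reason the statement is phrased as it is. Second, the monotonicity induction must be set up so that the hypothesis ``only $W$'s are ever overwritten'' is used explicitly (as above), since this is what makes the inductive case split work; with that in hand, checking feasibility of $\sigma'$ and the fact that the relevant long-run averages are finite (available in the spirit of Lemma~\ref{lemma:existence}) are routine. Everything else is bookkeeping, and the whole argument makes no use of periodicity, so it applies to a general optimal schedule.
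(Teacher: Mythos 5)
Your proposal is correct and takes essentially the same route as the paper's proof: both repair an optimal schedule by overwriting runs of $W$'s with alternating $D$/$U$ actions (leaving at most one $W$ per odd-length run) and conclude from the pointwise domination $C_{\sigma'}(t)\le C_\sigma(t)$ that optimality is preserved. The only differences are presentational --- you prove the monotonicity of the queue lengths by induction and fill each block in a single pass, whereas the paper asserts the domination and replaces two consecutive waits at a time, iterating as needed.
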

    \begin{proof}
    Consider an optimal solution \(\sigma\). 
    By Lemma~\ref{lemma:existence}, we know that such a solution exists and \(C_{\text{avg}, \sigma}\) is finite.
    Consider a time period \(t'\), such that \(\sigma(t' + i) = W\) for some \(t'\) and \(i = 1, 2, \ldots, T\) for some \(T \geq 2.\)
    If such a time period does not exist, we are done.
    We may assume that the lock is oriented downstream at the beginning of time \(t' + 1\). 
    Consider schedule \(\sigma'\), defined for all \(t\) with
    \[\sigma'(t)=
    \begin{cases}
    D & \text{ if } t=t'+1, \\
    U & \text{ if } t=t'+2, \\
    \sigma(t) & \text{otherwise.}
    \end{cases}
    \]
    Observe that \(\sigma'\) is alternating and is oriented downstream at the beginning of time \(t'+3\). 
    Therefore, the schedule is feasible.
    We repeat this procedure for all relevant time periods and denote the resulting schedule with \(\sigma'\).
    Moreover, since \(C_\sigma(t) \geq C_{\sigma'}(t)\geq 0\) for all \(t\), it follows that \(C_{\text{avg}, \sigma'}\) is also well-defined and finite. Further, it implies that 
    \[C_{\text{avg}, \sigma} = \lim_{T \to \infty} \frac{1}{T} \sum_{t=1}^{T} C_\sigma(t) \geq  \lim_{T \to \infty} \frac{1}{T} \sum_{t=1}^{T} C_{\sigma'}(t) = C_{\text{avg}, \sigma'}.\]
    Thus, \(\sigma'\) must also be an optimal solution, thereby showing the claim.
    \end{proof}

    Given a schedule \(\sigma\), we define a state space that represents the possible states of the arrival pattern and the lock. 
    Given the beginning of time period \(t\), the lock's alignment is denoted by \(\delta(t)\). 
    Define \(w_\delta(t)\) as the count of waiting operations \(W\) on the \( \delta \) alignment since the most recent switch from the opposite alignment.
    Symmetrically, define \(w_{f(\delta)}(t)\) as the count of waiting operations \(W\) on the \( f(\delta) \) alignment since the most recent switch from the opposite alignment.
    We define the states at time \(t\) with \(s(t) = (a(t), w_\delta(t), w_{f(\delta)}(t), \delta(t))\).
    
    \begin{lemma}\label{lemma:cyclicOpt}
        There exists an optimal schedule \(\sigma\) that is periodic with period of length at most \(8\Lambda\), where \(\Lambda = \lcm(\lambda_1, \ldots, \lambda_k)\).
    \end{lemma}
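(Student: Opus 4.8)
The plan is to recast the statement as a minimum‑mean‑cycle problem on a finite graph. First I would apply Lemma~\ref{lemma:2wOpt} to restrict attention to single‑wait schedules, which forces \(w_{\delta}(t), w_{f(\delta)}(t) \in \{0,1\}\) for every \(t\). Since the coordinate \(a(t)\) of the state \(s(t)\) does not by itself have the Markov property, I would replace it by the phase \(t \bmod \Lambda\), which by Lemma~\ref{lemma:cyclicArrival} determines the entire future arrival pattern; the resulting \emph{reduced state} \(\hat s(t) = (t \bmod \Lambda,\, w_{\delta(t)}(t),\, w_{f(\delta(t))}(t),\, \delta(t))\) then takes at most \(\Lambda\cdot 2\cdot 2\cdot 2 = 8\Lambda\) values. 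The crucial point is that \(\hat s(t)\) together with the chosen action \(\sigma(t)\) determines both \(\hat s(t+1)\) and the incurred cost \(C_\sigma(t)\): the queue lengths \(n_D(\cdot), n_U(\cdot)\) can be recovered from \(\hat s(t)\) by summing \(a_D\) or \(a_U\) over intervals whose endpoints are pinned down by \(\delta(t)\) and the two wait‑counters, and \(C_\sigma(t) = n_D(t)+n_U(t)\) then follows from the defining recursion. Consequently, feasible single‑wait schedules are in bijection with infinite walks in a finite directed graph \(G\) whose vertices are reduced states, whose arcs are the admissible single‑wait transitions (so that, e.g., the only admissible action from a state with \(w_\delta = 1\) is to process side \(\delta\)), and whose arcs carry weight \(C_\sigma(t)\).

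Next I would show that the optimal long‑run average cost \(C^\ast\) equals the minimum, over all directed cycles \(C\) of \(G\), of the mean arc weight of \(C\); call this value \(C_{\min}\). For \(C^\ast \ge C_{\min}\): given any feasible single‑wait schedule with well‑defined finite average cost, decompose its walk on \([1,T]\) into simple cycles together with a leftover simple path of at most \(8\Lambda\) arcs; every extracted cycle has mean weight at least \(C_{\min}\), so \(\tfrac1T\sum_{t=1}^{T} C_\sigma(t) \ge C_{\min}\cdot\tfrac{T-8\Lambda}{T}\to C_{\min}\), and taking the infimum over schedules (legitimate since by Lemma~\ref{lemma:2wOpt} single‑wait schedules attain the optimum and by Lemma~\ref{lemma:existence} the feasible set is nonempty) yields \(C^\ast \ge C_{\min}\). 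For \(C^\ast \le C_{\min}\): a cycle \(C\) of \(G\) returns to the same phase, so its length \(\ell\) is a positive multiple of \(\Lambda\) with \(\ell \le |V(G)| \le 8\Lambda\), and it must use at least one \(D\)-action and at least one \(U\)-action (a pure‑\(W\) stretch cannot restore a wait‑counter, and a \(D\)-action can be reversed only by a \(U\)-action and vice versa). Repeating a minimum‑mean cycle \(C\) thus produces a schedule \(\sigma'\) that is periodic with period \(\ell \le 8\Lambda\), is feasible — the \(D/U\) alternation holds inside the cycle and carries over the wrap‑around point because the alignment is restored — and, by Lemma~\ref{lemma:existence}, has well‑defined finite average cost equal to the mean weight of \(C\), i.e.\ \(C_{\text{avg},\sigma'} = C_{\min}\). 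Optimality of \(C^\ast\) gives \(C^\ast \le C_{\min}\), hence \(C^\ast = C_{\min}\) and \(\sigma'\) is the required optimal schedule of period at most \(8\Lambda\).

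I expect the main obstacle to be the bookkeeping behind ``\(\hat s(t)\) and \(\sigma(t)\) determine \(C_\sigma(t)\) and \(\hat s(t+1)\)'': one must unwind the recursion for \(n_\delta\) back through the most recent switch on each side, using \(w_{\delta(t)}(t)\) to locate the last processing of the side opposite to \(\delta(t)\) and additionally \(w_{f(\delta(t))}(t)\) to locate the last processing of the side aligned with \(\delta(t)\), and then verify that these intervals together with the current phase suffice to read off both queue lengths. A secondary subtlety is that the naive argument — pigeonhole a repeated reduced state within the first \(8\Lambda+1\) steps and splice the segment between its two occurrences into a period — does not work directly, since that single segment need not have average cost below \(C^\ast\); it is precisely to handle this that one passes to a \emph{minimum}-mean cycle and argues via the cycle decomposition above. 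Everything else (finiteness of \(G\), admissibility of arcs enforcing feasibility and the single‑wait property, and feasibility of the periodic repetition at the seam) is routine once the state encoding is fixed.
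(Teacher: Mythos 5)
Your proof is correct, and it takes a genuinely different route from the paper's. Both arguments share the same two ingredients — the single-wait reduction of Lemma~\ref{lemma:2wOpt} and the fact that the state \((t \bmod \Lambda,\, w_{\delta},\, w_{f(\delta)},\, \delta)\) ranges over at most \(8\Lambda\) values — but they diverge from there. The paper partitions the horizon into windows of \(8\Lambda+1\) steps, pigeonholes a repeated state inside each window, argues that some fixed state together with its intervening action sequence recurs in a non-vanishing proportion of windows, and shows that repeating that sequence is optimal via a skip-and-splice exchange. You instead recast the problem as a minimum-mean-cycle computation on the finite transition graph and prove \(C^{\ast}=C_{\min}\) by cycle decomposition (lower bound, using nonnegativity of the arc weights so the leftover path of length at most \(8\Lambda\) cannot hurt) and by repeating a simple minimum-mean cycle (upper bound). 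The pitfall you point out — that a single repeated-state segment need not have average cost below \(C^{\ast}\) — is precisely what the paper's ``non-vanishing proportion'' step is designed to circumvent, and your formulation handles it more transparently, since the averaging over extracted cycles is explicit rather than buried in a density argument. Your version buys a cleaner, more standard proof (the classical mean-payoff fact for finite weighted digraphs), the extra observation that the period can be taken to be a multiple of \(\Lambda\), and an algorithmic corollary (the optimum is computable by a min-mean-cycle routine); the paper's version works directly on the given optimal schedule without introducing the graph machinery. Two details you should still write out: first, the bookkeeping that the reduced state plus the chosen action determines the incurred cost — this is exactly the transition cost \(C(t,S',S)\) used by the paper's Algorithm~2, so it is consistent with the rest of the paper; second, the seam and initialization — since the cycle's length is a multiple of \(\Lambda\) it visits every phase and can be rotated to start at phase \(1\), and the discrepancy between the empty initial queues and the steady-state history encoded by the starting vertex affects only a finite prefix, hence not the long-run average.
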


    \begin{proof}
        Consider an optimal schedule \(\sigma\).
        By Lemma~\ref{lemma:2wOpt}, we may assume that \(\sigma\) is single-wait. Given time \( t \), observe that by Lemma~\ref{lemma:cyclicArrival}, there are \(\Lambda\) possible values for \( a(t) \), while \( w_\delta(t) \), \( w_{f(\delta)}(t) \), and \( \delta(t) \) each have two possible values.
        Thus, there are \(8\Lambda\) possible states for \(s(t)\). 

        We partition \(\sigma\) into windows of \(8\Lambda + 1\) time steps. Due to the pigeonhole principle, each window must contain a state that occurs (at least) twice. For each window, identify two occurrences of such a repeated state and the sequence of actions between the two occurrences of that state. Note that this sequence of actions is of length at most \(8\Lambda\).
        
        Because there are only finitely many such sequences, there must exist a state \(s\) and sequence of actions between two occurrences of state \(s\) that occurs in a non-vanishing (as \(T\to\infty \)) proportion of all windows. We claim that the periodic schedule obtained by repeating this sequence of actions is optimal.
        
        Otherwise, if the long-term average cost of this periodic schedule is higher, this contradicts the optimality of \(\sigma\): every time in \(\sigma\) that we encounter this sequence of actions we could instead not perform those actions, and jump ahead to after the next occurrence of state \(s\) and perform those actions instead. Since the average cost of the skipped actions is higher than the long-term average cost of \(\sigma\), and we skip a non-vanishing proportion of actions, the long-term average cost of the not-skipped actions must be lower than the long-term average cost of \(\sigma\).
    \end{proof}
    
    \subsection{Closed-form expression for two streams}\label{sec:opt2Streams}
    In this section, we present a solution to the problem while restricting the instances to two streams.
    We assume one stream is in the downstream direction, while the other is in the upstream direction. 
    Thus, the input instance reduces to four parameters \(\mu_D\), \(\mu_U\), \(\lambda_D\) and \(\lambda_U\). 
    The case with both streams in the same direction is solved with symmetric arguments. 

    First, we consider the case where one of the periodicities is equal to one. We may assume that \(\lambda_D = 1\). 
    \begin{lemma}
        There exists an optimal schedule \(\sigma\) that is periodic with period 2. Specifically, we have that
       \[
        (\sigma(1), \sigma(2)) = 
        \begin{cases} 
        (D, U), & \text{if } \mu_U \text{ is odd,} \\
        (U, D), & \text{otherwise.}
        \end{cases}
        \]
    \end{lemma}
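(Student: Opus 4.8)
The plan is to prove the lemma by sandwiching: establish a lower bound on $C_{\text{avg},\sigma}$ valid for \emph{every} feasible schedule, and then check that the period-$2$ schedule of the statement attains it. The structural fact driving everything is that $\lambda_D=1$ forces $\mu_D=1$, so $a_D(t)=1$ for all $t$ -- a downstream vessel arrives in every period, which leaves essentially no slack on the downstream side. By Lemma~\ref{lemma:cyclicOpt} (together with Lemma~\ref{lemma:2wOpt}) it suffices to work with an optimal schedule that is periodic and single-wait, so all the relevant limits exist; the $\tfrac12$ bound below I would phrase as a $\liminf$ so that it applies to arbitrary schedules.

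\noindent\emph{The lower bound $C_{\text{avg},\sigma}\ge\tfrac12$.} Feasibility forces the non-$W$ actions to alternate between $D$ and $U$, so no two consecutive periods both carry action $D$; hence in any window $[1,T]$ at most $\lceil T/2\rceil$ periods have $\sigma(t)=D$, i.e.\ at least $\floor{T/2}$ periods have $\sigma(t)\ne D$. For each such period, $n_D(t)=n_D(t-1)+a_D(t)\ge a_D(t)=1$, so $C_\sigma(t)\ge n_D(t)\ge 1$, and therefore $\frac1T\sum_{t=1}^T C_\sigma(t)\ge \frac1T\floor{T/2}\to\tfrac12$. Now let $\sigma^\star$ be the period-$2$ schedule of the statement; its $U$-operations occupy exactly one parity class of periods and its $D$-operations the other, the parities being chosen (via the case distinction on $\mu_U$) so that the $U$-periods carry the residue of $\mu_U$ modulo $2$. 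Under $\sigma^\star$ the downstream queue is $0$ on $D$-periods and $1$ on $U$-periods, contributing exactly $\tfrac12$ per period. If $\lambda_U$ is even, every upstream arrival time is $\equiv\mu_U\pmod 2$, hence falls on a $U$-period, so $n_U\equiv0$ and $C_{\text{avg},\sigma^\star}=\tfrac12$; combined with the lower bound, $\sigma^\star$ is optimal and periodic with period $2$, proving the lemma in this case.

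\noindent\emph{The case $\lambda_U$ odd.} Here the upstream arrival times alternate parity, so under $\sigma^\star$ exactly half of them land on a $U$-period (and wait $0$) while the other half land on a $D$- or $W$-period and are cleared by the very next $U$-operation, waiting exactly one period; hence $C_{\text{avg},\sigma^\star}=\tfrac12+\tfrac1{2\lambda_U}$, and the other period-$2$ schedule gives the same value. The work is the matching lower bound. I would take an optimal periodic single-wait schedule $\sigma$ of period $P$, chosen to be a multiple of $2\lambda_U$, and let $m$ be its number of $D$-operations per period, which equals its number of $U$-operations by cyclic alternation. The $P-m$ non-$D$ periods of one period split into $m$ maximal runs of lengths $g_1,\dots,g_m$ with $\sum_r g_r=P-m$, and a run of length $g$ contributes $1+2+\dots+g=\binom{g+1}{2}$ to $\sum_t n_D(t)$; by convexity of $x\mapsto\binom{x+1}{2}$ the downstream contribution per period is at least $\frac{P(P-m)}{2mP}=\frac12\!\left(\frac{P}{m}-1\right)$, which equals $\tfrac12$ only when $P=2m$. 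In that (strict-alternation) case the $U$-operations all share one parity, so with $\lambda_U$ odd at least half of the $P/\lambda_U$ upstream arrivals are mis-phased and each contributes at least one unit of waiting, giving an upstream contribution of at least $\frac{1}{2\lambda_U}$ per period; adding the two bounds closes this sub-case.

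\noindent\emph{The main obstacle.} What remains is the regime $P>2m$, where $\sigma$ trades a strictly larger downstream cost, $\frac12\!\left(\frac{P}{m}-1\right)>\tfrac12$, for a potentially smaller upstream cost. One must show that the downstream surplus $\frac12\!\left(\frac{P}{m}-1\right)-\tfrac12=\frac{P-2m}{2m}$ already outweighs the largest achievable reduction of the upstream cost below $\tfrac1{2\lambda_U}$. When $m\le\frac{\lambda_U P}{2\lambda_U+1}$ the downstream surplus alone is at least $\tfrac1{2\lambda_U}$ and we are done; the delicate range is $\frac{\lambda_U P}{2\lambda_U+1}<m<\frac P2$, where we must bound how many upstream arrivals $\sigma$ can serve on arrival given only $m$ $U$-operations. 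The point is that serving an arrival on arrival pins a $U$-operation to a time of the fixed residue class $\mu_U\bmod\lambda_U$; since consecutive such times are $\lambda_U$ apart and each gap between two pinned $U$-operations must by alternation contain at least one $D$-operation (with any intermediate, ``wasted'' $U$-operation forced to sit between two further $D$-operations inside that gap), the number of arrivals that can be caught exceeds $\frac{P}{2\lambda_U}$ by only an amount proportional to the slack $\frac P2-m$. Making this counting precise -- i.e.\ turning ``catching extra arrivals costs extra $U$-operations which in turn cost extra $W$-periods'' into the inequality that the saved upstream waiting is at most $\frac{P-2m}{2m}$ -- is the one genuinely technical step of the proof; everything else is bookkeeping.
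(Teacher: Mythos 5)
Your lower bound of $\tfrac12$ and its attainment for $\lambda_U$ even are fine and essentially match what the paper does. The problem is the $\lambda_U$ odd case, where you yourself flag the regime $P>2m$ as ``the main obstacle'' and leave it unproved. This is a genuine gap, not bookkeeping: the convexity bound $\tfrac12(P/m-1)$ on the downstream cost leaves a real window in which a wait-containing schedule has strictly smaller downstream cost than $\tfrac12+\tfrac1{2\lambda_U}$ (for instance $\lambda_U=3$, $P=21$, $m=10$ gives a downstream contribution of only $\tfrac{11}{20}<\tfrac12+\tfrac16$), so optimality of the period-$2$ schedule hinges entirely on the quantitative upstream lower bound you only sketch. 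The counting argument you gesture at (``catching extra arrivals pins $U$-operations, which forces extra $W$-periods'') is exactly the missing step, and without it the lemma is not proved for $\lambda_U$ odd.

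The paper avoids your $P>2m$ analysis entirely with a one-line exchange argument: since $\lambda_D=1$ forces $a_D(t)=1$ for every $t$, each $W$-period adds at least one unit of downstream waiting while saving at most one unit of upstream waiting, so some optimal schedule is wait-free; a wait-free feasible schedule is strictly alternating, hence one of the two period-$2$ schedules, and when $\lambda_U$ is odd both have the same long-run average cost ($\tfrac12$ downstream plus $\tfrac1{2\lambda_U}$ upstream, because the upstream arrival times alternate parity). If you want to salvage your route, the cleanest fix is to import precisely this domination-of-waits observation, after which $P=2m$ may be assumed and your remaining argument closes. One further caution: verify the phase in the even case rather than asserting it. With $(\sigma(1),\sigma(2))=(D,U)$ the $U$-actions sit at even times, so for $\mu_U$ odd and $\lambda_U$ even the upstream arrivals all land at odd times and each waits one period under that schedule, whereas $(U,D)$ catches them all; your claim that the stated case distinction makes ``the $U$-periods carry the residue of $\mu_U$ modulo $2$'' does not hold under the literal indexing and needs to be checked against the definitions of $\sigma$ and $n_\delta$.
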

    \begin{proof}
        Observe that \(\lambda_D = 1\) implies that in any schedule, we have a waiting cost from vessels arriving downstream at least every two time units.
        If \(\lambda_U\) is even and \(\mu_D\) is odd, consider the periodic schedule repeating \((D, U)\). 
        It is easy to verify, that the occurs one waiting cost cost every two time units of vessels arriving downstream and no waiting cost of vessels arriving upstream. 
        Thus, the schedule must be optimal. 
        If \(\mu_D\) is even, symmetric reasoning show that the periodic schedule repeating \((U, D)\) is optimal.

        Assume \(\lambda_U\) is odd and consider an arbitrary schedule \(\sigma\). 
        Observe that inserting a waiting between any operation, may decrease unit of waiting cost of vessels arriving in upstream direction and always increase a unit of waiting cost of vessels arriving in downstream direction.
        Therefore, there must exist an optimal schedule, that is wait-free. 
        Observe that in both the periodic schedule repeating \((U, D)\) and \((D, U)\), there occurs one waiting cost every two time units of vessels arriving downstream and one waiting cost every \(2\lambda_U\) time units of vessels arriving upstream.
        Thus, both schedules are optimal, thereby showing the claim.
    \end{proof}
    
    For the remainder of this section, we assume that we are in the case where \(\lambda_D \geq 2\) and \(\lambda_U \geq 2\).
    \begin{lemma}\label{lemma:lowerBound}
        For any optimal schedule \(\sigma\) it holds that
        \[
        C_{avg, \sigma} \geq \begin{cases} \frac{1}{\Lambda} & \text{if } \mu_U-\mu_D \equiv 0 \quad \mod \gcd\left(\lambda_D, \lambda_U\right)\\
        0 & \text{otherwise},
        \end{cases}
        \]
        where \(\Lambda = \lcm(\lambda_D, \lambda_U)\).
    \end{lemma}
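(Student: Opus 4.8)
The ``$0$ otherwise'' case is immediate, since $C_\sigma(t) = n_D(t) + n_U(t) \geq 0$ for every schedule $\sigma$ and every $t$. So the plan is to handle the case $\mu_U - \mu_D \equiv 0 \bmod \gcd(\lambda_D, \lambda_U)$ by a counting argument on simultaneous arrivals. First I would translate the side condition into number theory: by B\'ezout's identity (equivalently, the Chinese Remainder Theorem), the system of congruences $t \equiv \mu_D \bmod \lambda_D$ and $t \equiv \mu_U \bmod \lambda_U$ admits a solution if and only if $\mu_U - \mu_D \equiv 0 \bmod \gcd(\lambda_D,\lambda_U)$, and when it does, the solution set is exactly one residue class modulo $\Lambda = \lcm(\lambda_D,\lambda_U)$. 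Concretely this means there is a time $t^\ast$ at which a downstream vessel and an upstream vessel arrive simultaneously (i.e.\ $a_D(t^\ast) \geq 1$ and $a_U(t^\ast) \geq 1$), and such ``collision times'' recur with period exactly $\Lambda$; hence every block of $\Lambda$ consecutive periods contains precisely one collision time.

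Next I would show that any feasible schedule pays at least one unit of waiting cost at each collision time. Fix a collision time $t^\ast$. Since $\sigma(t^\ast) \in \{D, U, W\}$, the lock can clear at most one of the two directions at $t^\ast$: if $\sigma(t^\ast) = D$ then $n_U(t^\ast) = n_U(t^\ast - 1) + a_U(t^\ast) \geq 1$; if $\sigma(t^\ast) = U$ then symmetrically $n_D(t^\ast) \geq 1$; and if $\sigma(t^\ast) = W$ then both $n_D(t^\ast) \geq 1$ and $n_U(t^\ast) \geq 1$. In every case $C_\sigma(t^\ast) = n_D(t^\ast) + n_U(t^\ast) \geq 1$.

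Finally I would assemble the bound. Given $T$, partition $\{1, \dots, T\}$ into $\lfloor T/\Lambda \rfloor$ disjoint windows of $\Lambda$ consecutive periods (plus a shorter leftover block), each full window containing exactly one collision time. Using $C_\sigma(t) \geq 0$ everywhere and $C_\sigma(t) \geq 1$ at each of these $\lfloor T/\Lambda\rfloor$ collision times gives $\sum_{t=1}^{T} C_\sigma(t) \geq \lfloor T/\Lambda \rfloor$, so $\frac{1}{T}\sum_{t=1}^T C_\sigma(t) \geq \frac{1}{T}\lfloor T/\Lambda\rfloor \to \frac{1}{\Lambda}$ as $T \to \infty$, whence $C_{avg,\sigma} \geq \frac{1}{\Lambda}$. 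The only real subtlety is the first step --- correctly identifying that the stated congruence is equivalent to solvability of the simultaneous-arrival system and that collisions then occur exactly once per $\Lambda$-window; the remaining steps are routine bookkeeping with the definition of $C_\sigma$ and a Ces\`aro limit.
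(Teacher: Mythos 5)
Your proposal is correct and follows essentially the same route as the paper: both reduce the congruence condition to solvability of the simultaneous-arrival system (you via CRT/B\'ezout, the paper via the linear Diophantine equation $\mu_D + k_D\lambda_D = \mu_U + k_U\lambda_U$), conclude that collision times form a single residue class modulo $\Lambda$, and derive the $1/\Lambda$ bound from one forced unit of waiting per collision. If anything, you are more explicit than the paper on the two bookkeeping steps it leaves implicit --- the case analysis showing $C_\sigma(t^\ast) \geq 1$ at a collision regardless of $\sigma(t^\ast) \in \{D,U,W\}$, and the Ces\`aro limit computation.
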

    \begin{proof}
        Consider an instance of the periodic lock scheduling problem with two streams. 
        Since \(\lambda_D \geq 2\) and \(\lambda_U \geq 2\), in an optimal solution a waiting cost occurs only when arrivals from both streams coincide at a time period.
        Further, we know that two arrivals coincide at each time period \(t\) if and only if \(t\) satisfies the following linear Diophantine equation
         \begin{align}
        t = \mu_D + k_D\lambda_D = \mu_U + k_U\lambda_U. \label{eq:linDioEq}
        \end{align}
        for any \(k_D,k_U\in \mathbb{Z}_+\). A linear Diophantine equation is of the form \(ax + by = c\), where \(a\), \(b\), and \(c\) are given integers. According to basic number theory, this Diophantine equation has an integer solution for \(x\) and \(y\) if and only if \(c\) is divisible by \(\gcd(a, b)\). Moreover, if \((x^0, y^0)\) represents a solution, then any other solution can be expressed as \((x^0 + mv, y^0 - mu)\), where \(m\) is an arbitrary integer, and \(u\) and \(v\) are the respective quotients of \(a\) and \(b\) divided by \(\gcd(a, b)\). For more details, we refer to Hardy and Wright's book \cite{hardy1979introduction}.

        Hence, for Equation~\eqref{eq:linDioEq}, simple algebra shows that an integer solution exists if and only if 
        \[\mu_U - \mu_D \equiv 0 \mod \gcd(\lambda_D, \lambda_U)\]
        For a particular solution \((k_D^0, k_U^0)\) and some \(m \in \mathbb{Z}_+\), all the solutions of \(k_D\) can be expressed as
        \begin{align*}
        k_D &= k_D^0 + m\frac{\lambda_U}{\gcd(\lambda_D, \lambda_U)}.
        \end{align*}
        Therefore, it follows that for any \(m \in \mathbb{Z}_+\), the time steps with occurrences of two arrivals are characterized by
        \begin{align*}
        \mu_D + k_D\lambda_D &= \mu_D + k_D^0\lambda_D + m\frac{\lambda_D\lambda_U}{\gcd(\lambda_D, \lambda_U)} 
        \\ &= \mu_D + k_D^0\lambda_D + m\lcm(\lambda_D, \lambda_U)
        \\ &= \mu_D + k_D^0\lambda_D + m\Lambda,
        \end{align*}
        thereby showing the lemma.
    \end{proof}
    As a representation of the solution, we present a closed-form expression that determines the single action for a given time period. We let \(\delta_1\) and \(\delta_2\) denote directions, such that \(\lambda_{\delta_1} \leq \lambda_{\delta_2}\), and let \(A_1\) denote the action of processing vessels from stream direction \(\delta_1\) and let \(A_2\) denote the action of processing vessels from stream direction \(\delta_2\). Then 
    \begin{equation}\label{eq:closedExpr}
        \sigma(t) = 
        \begin{cases} 
            A_1, & \text{if } C_1(t), \\
            A_2, & \text{otherwise, if } C_2(t) \text{ or } C_3(t) \text{ or } C_4(t), \\
            W, & \text{otherwise},
        \end{cases}
    \end{equation}
    where \(C_1(t), C_2(t), C_3(t)\), and \(C_4(t)\) are boolean functions that return {\sc True} if their respective expressions are satisfied, and {\sc False} otherwise. These functions are defined as follows:
    \begin{align*}
     C_1(t) &\coloneqq \left(t \equiv \mu_{\delta_1} \mod \lambda_{\delta_1}\right) \text{ and  not } \left( (t-1) \equiv \mu_{\delta_1} \mod \lambda_{\delta_1}\right)\\
     C_2(t) &\coloneqq \left((t-1) \equiv \mu_{\delta_1} \mod \lambda_{\delta_1}\right) \text{ and } \left((t-1) \equiv \mu_{\delta_2} \mod \lambda_{\delta_2}\right) \\
    C_3(t) &\coloneqq \left(t \not\equiv \mu_{\delta_1} \mod \lambda_{\delta_1}\right) \text{ and } \left(t \equiv \mu_{\delta_2} \mod \lambda_{\delta_2}\right) \\
    C_4(t) &\coloneqq \left((t + 1) \equiv \mu_{\delta_1} \mod \lambda_{\delta_1}\right) \text{ and }\left(\mu_{\delta_1} + \left\lfloor\frac{t-\mu_{\delta_1}}{\lambda_{\delta_1}}\right\rfloor\lambda_{\delta_1} > \mu_{\delta_2} + \left\lfloor\frac{t-\mu_{\delta_2}}{\lambda_{\delta_2}}\right\rfloor\lambda_{\delta_2} \right).
   \end{align*}
   Condition \(C_1(t)\) ensures that each arrival from stream \(\delta_1\) is processed immediately, without any waiting time. 
   In situations where two arrivals occur at time \(t-1\), condition \(C_2(t)\) guarantees that the vessel from stream \(\delta_2\) is processed in the next time period, \(t\). 
   For scenarios with an arrival from stream \(\delta_2\) but none from stream \(\delta_1\), condition \(C_3(t)\) ensures that the stream \(\delta_2\) arrival is processed.
   Lastly, in condition \(C_4(t)\), the lock is reoriented to direction \(\delta_1\) if it is currently oriented to \(\delta_2\) and an arrival from \(\delta_1\) occurs in the next time period. 

  Given an input instance restricted to two streams \(\mathcal{I} = (\mu_D, \mu_U, \lambda_D, \lambda_U)\) and a given time \(t\), a lock operator can evaluate Equation~\eqref{eq:closedExpr} in polynomial time to determine the optimal operation for time \(t\). Note that it is not possible to compute the entire schedule in polynomial time since the length of the schedule and thus the size of the output could be exponential in the size of the input.
  Denote by \(\sigma^*\) the schedule obtained by repeatedly applying Equation~\eqref{eq:closedExpr}.
   By Lemma~\ref{lemma:cyclicArrival}, we know that \(\sigma^*\) is periodic with period \(\Lambda = \text{lcm}(\lambda_U, \lambda_D)\).

   \begin{theorem}
       \(\sigma^*\) is an optimal schedule. Furthermore, \(\sigma^*(t)\) can be computed in polynomial time.
   \end{theorem}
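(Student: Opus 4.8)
The plan is to prove the two assertions separately, the polynomial-time bound being by far the easier one. For a given \(t\), each of the predicates \(C_1(t),\dots,C_4(t)\) appearing in \eqref{eq:closedExpr} is a fixed-size boolean combination of the residues \(t\bmod\lambda_{\delta_1}\), \((t\pm1)\bmod\lambda_{\delta_1}\), \(t\bmod\lambda_{\delta_2}\), the two floor-divisions \(\lfloor(t-\mu_{\delta_1})/\lambda_{\delta_1}\rfloor\) and \(\lfloor(t-\mu_{\delta_2})/\lambda_{\delta_2}\rfloor\), and one comparison; every such operation acts on integers of bit-length \(O(\log t+\log\lambda_{\delta_1}+\log\lambda_{\delta_2})\) and is therefore polynomial in \(\log t\) and the input size, so \(\sigma^*(t)\) is computable in polynomial time.

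For optimality I would show that \(\sigma^*\) is feasible and that \(C_{\text{avg},\sigma^*}\) equals the lower bound of Lemma~\ref{lemma:lowerBound}; since \(\sigma^*\) is periodic with period \(\Lambda=\lcm(\lambda_U,\lambda_D)\) by Lemma~\ref{lemma:cyclicArrival}, \(C_{\text{avg},\sigma^*}\) is well-defined, and matching the lower bound then settles optimality. The core of the argument is a structural description of \(\sigma^*\) on each maximal block \([p,p')\) between two consecutive arrivals of the stream \(\delta_1\) with the smaller period. I would prove, by induction along the time axis, the invariant that the lock is aligned to \(\delta_1\) at the start of every \(\delta_1\)-arrival period; inside \([p,p')\) the rule \(C_1\) fires exactly once, namely at \(p\), where it processes the whole \(\delta_1\)-batch immediately, and exactly one of \(C_2,C_3,C_4\) fires exactly once, producing a single \(A_2\)-action that re-aligns the lock to \(\delta_1\) before \(p'\). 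Which rule fires is forced by the block: \(C_2\) at \(p+1\) when \(p\) is a coincidence of both streams; \(C_3\) at the unique \(\delta_2\)-arrival inside \((p,p')\) when one exists (at most one can, because \(\lambda_{\delta_1}\le\lambda_{\delta_2}\)); and \(C_4\) at \(p'-1\) otherwise. I would check that these three cases are exhaustive and mutually exclusive, that \(C_4(t)\) is true precisely when the lock is still aligned to \(\delta_2\) one step before a \(\delta_1\)-arrival (so the \(A_2\) it triggers is a legal action on an, possibly empty, batch), and that consequently the non-\(W\) actions strictly alternate \(A_1,A_2,A_1,A_2,\dots\); since each block contributes one \(A_1\) and one \(A_2\) and each period of length \(\Lambda\) contains \(\Lambda/\lambda_{\delta_1}\) blocks, the alternation closes up consistently around the cycle, giving feasibility.

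Granting this structure, the cost computation is immediate. Every \(\delta_1\)-vessel is processed in its arrival period, so \(n_{\delta_1}\equiv0\). A \(\delta_2\)-vessel arriving at a non-coincidence time triggers \(C_3\) and is processed in its arrival period, contributing nothing; a \(\delta_2\)-vessel arriving at a coincidence time \(t_c\) is processed one period later by \(C_2\), contributing exactly one unit of waiting at time \(t_c\), with no further build-up since the next \(\delta_2\)-arrival is at least \(\lambda_{\delta_2}\ge2\) periods later. Hence \(C_{\sigma^*}(t)=1\) exactly at the coincidence times and \(0\) elsewhere. By the Diophantine analysis in the proof of Lemma~\ref{lemma:lowerBound}, the coincidence times form the arithmetic progression \(\mu_D+k_D^0\lambda_D+m\Lambda\) (one per period \(\Lambda\)) when \(\mu_U-\mu_D\equiv0\bmod\gcd(\lambda_D,\lambda_U)\), and are empty otherwise; so \(C_{\text{avg},\sigma^*}\) equals \(1/\Lambda\) in the first case and \(0\) in the second, exactly the bound of Lemma~\ref{lemma:lowerBound}, and \(\sigma^*\) is optimal.

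I expect the feasibility/structure step to be the main obstacle: one has to be careful with the boundary behaviour of \(C_4\) --- its interplay with \(C_2\) and \(C_3\), and the degenerate subcase \(\lambda_{\delta_1}=\lambda_{\delta_2}\) in which every block has a coincidence at its left endpoint --- and with the fact that \eqref{eq:closedExpr} also dictates actions on a finite startup prefix that may not respect a globally consistent lock orientation; since only the long-run average matters, this prefix is harmless, but the write-up should make that explicit.
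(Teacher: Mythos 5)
Your proposal is correct and follows essentially the same route as the paper: polynomial-time computability via a constant number of arithmetic operations on integers of polynomial bit-length, and optimality by verifying feasibility (alternation of the non-$W$ actions) and showing that the realized cost---one unit of waiting per coincidence time and zero otherwise---matches the lower bound of Lemma~\ref{lemma:lowerBound}. The paper dismisses the feasibility and case-disjointness verification as ``straightforward,'' whereas you carry it out explicitly via the block decomposition between consecutive $\delta_1$-arrivals and the invariant on the lock's orientation; that added rigour (including the observation that $\lambda_{\delta_1}\le\lambda_{\delta_2}$ forces at most one $\delta_2$-arrival per block, so at most one of $C_2,C_3,C_4$ fires) is exactly the detail the paper's proof leaves to the reader.
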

   \begin{proof}
            We discuss the time complexity and the correctness of the algorithm separately.
        \medskip 
        
        \noindent \emph{Time complexity.} The computation of \(\sigma^*(t)\) involves only a constant number of arithmetic operations. These can clearly be performed in time polynomial in the bit lengths of $t$ and $\mathcal{I}$.
        \medskip 
        
        \noindent \emph{Correctness.} Based on the previous discussion, it is straightforward to verify that the cases are not coinciding and that the lock is alternating.
        Given time \(t\), if there is an arrival only from one stream, then \(C_1(t)\) and \(C_3(t)\) ensure that no waiting time is incurred at any arrival.
        If there are two arrivals, then \(C_1(t)\) and \(C_2(t)\) imply that one unit of waiting time is incurred.
        Thus, if there are time periods with coinciding arrivals, there is at most one waiting time every \(\Lambda = \lcm(\lambda_D, \lambda_U)\) time units.
        Otherwise, there is no waiting time.
        Thus, by Lemma~\ref{lemma:lowerBound}, the schedule is optimal.
   \end{proof}

    \subsection{Algorithm}\label{sec:algorithm2}

    We design a dynamic programming algorithm that computes an optimal periodic schedule in pseudo-polynomial time.

    We now construct the dynamic programming table. 
    By Lemma~\ref{lemma:2wOpt}, we know that the lock can wait for at most one time period, before alternating to the other side.
    Thus, we know that the possible states of the lock at time \(t\) are defined by the set 
    \[\mathcal{S} = \{(\delta, w_\delta, w_{f(\delta)}) \mid  \delta \in \{D, U\}, \, w_\delta \in \{0, 1\}, \, w_{f(\delta)} \in \{0, 1\}\}, \]
    where in a state \(S\in \mathcal{S}\) the lock is positioned on side \( \delta \), there are \( w_\delta \) waiting operations on side \( \delta \), and \( w_{f(\delta)} \) waiting operations on side \( f(\delta) \).
    Let \(D(t, S, S_0)\) represent the minimum waiting time of lock operations for time periods \(1,2, \ldots, t\), while the lock is in state \(S\) at time \(t\) and was at state \(S_0\) at time \(1\). By Lemma~\ref{lemma:cyclicOpt}, we restrict our search for schedules that are of length \(T \coloneqq 8\Lambda\), where \(\Lambda = lcm(\lambda_1, \ldots, \lambda_k)\). Thus, we restrict the time periods to the interval \(1 \leq t \leq T\).
    
    The state space is initiated for all \(S, S_0 \in \mathcal{S}\) with 
    \begin{equation}\label{eq:init}
    D(1, S, S_0) =
    \begin{cases}
    0, & \text{if } S = S_0, \\
    \infty, & \text{otherwise}.
    \end{cases}
    \end{equation}
    The dynamic programming table is updated by the following recursive relation. 
	We proceed in lexicographic order. 
    Assume all states of the dynamic programming table up to but not including \((t, S, S_0)\) are filled in. 
    Given a time t, the lock can either wait for one time period, provided it has not already waited on the current side since the last switch, or it can proceed to process a lockage. Thus, given a state \(S\), all potential preceding states form a set
    \[
    pred(S) =
    \begin{cases}
    \{(f(\delta), w_{f(\delta)}, 0), (f(\delta), w_{f(\delta)}, 1)\}, & \text{if } w_\delta = 0, \\
   \{(\delta, w_\delta - 1, w_{f(\delta)})\}, & \text{otherwise}.
    \end{cases}
    \]

    Consider time \(t\), states \(S\) and \(S'\in pred(S)\). If \(w_{\delta} = 1\), then a waiting operation was scheduled at time \(t-1\) and no waiting cost occurs.
    Otherwise, the lock switched from direction \(\delta'\) to \(\delta\) at time \(t-1\). Thus, all vessels that arrived from direction \(\delta'\) during the time periods \(t-1, \ldots, t-w(S')\), where \(w(S') = 2 + w'_{\delta} + w'_{f(\delta)}. \) are processed.
    We define the arrival pattern cyclically, such that \( a_\delta(-i) = a_\delta(T - i) \) for any \( i \geq 0 \) and \(\delta \in \{D,U\}\).
    It follows that the transition cost between the two states is defined with
    \[
    C(t, S', S) =
    \begin{cases}
    \sum_{i=1}^{w(S')} (i-1)a_{\delta'}(t - i), & \text{if } w_{\delta} = 0, \\
    0, & \text{otherwise}.
    \end{cases}
    \]

    Then, the minimum waiting scheduling all operations for the time periods \(1, 2, \ldots, t\), while the lock is in state \(S\) at time \(t\) and was at state \(S_0\) at time \(0\) is determined by 
	\begin{equation}\label{eq:Recursion}
		D(t, S, S_0) = \min_{S' \in prec(S)} D(t-1, S', S_0) + C(t, S',S)
	\end{equation}
    Finally, the minimum long-run average waiting time is computed by executing the value at the maximum length of the schedule for each initial state of the lock, that is
    \begin{equation}\label{eq:Termination} \frac{1}{T}\min\{D(T, S, S_0) + C(1, S,S_0) \mid S\in pred(S_0), S_0 \in \mathcal{S} \} \end{equation}

    The dynamic programming algorithm is defined by its initialization in Equation~\eqref{eq:init}, its recursion in Equation~\eqref{eq:Recursion} and by returning the minimum long-run average waiting time as specified in Equation~\eqref{eq:Termination}. 
    We refer to Equations~\eqref{eq:init},~\eqref{eq:Recursion} and~\eqref{eq:Termination} as Algorithm~2. 
	\begin{theorem}\label{thm:TSPNoProcDeadlines}
		Algorithm~2 solves the periodic scheduling problem in \(O(kT)\) time.
	\end{theorem}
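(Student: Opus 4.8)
The plan is to treat Algorithm~2 as a minimum‑mean‑cycle computation on a small state graph and, as in the previous theorems, to argue the time complexity and the correctness separately.

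\emph{Time complexity.} Since \(|\mathcal{S}| = 8\), the table has \(O(|\mathcal{S}|^2 T) = O(T)\) entries. Evaluating the recursion~\eqref{eq:Recursion} at a cell \((t,S,S_0)\) is a minimum over the constant‑size set \(pred(S)\), and each transition cost \(C(t,S',S)\) is a sum of at most \(w(S') \le 4\) terms \(a_{\delta'}(t-i)\). Precomputing \(a_D(t)\) and \(a_U(t)\) for all \(1 \le t \le T\) costs \(O(kT)\) in total, since \(a_\delta(t) = \sum_{i\in I_\delta} a_i(t)\); afterwards each transition costs \(O(1)\). Hence the initialisation~\eqref{eq:init}, the \(O(T)\) recursion steps, and the final \(O(1)\)-term minimisation~\eqref{eq:Termination} together run in \(O(kT)\) time.

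\emph{Correctness.} First I would show that the closed walks of length \(T\) the algorithm ranges over are exactly the feasible single‑wait schedules satisfying \(\sigma(t)=\sigma(t+T)\): fixing an initial state \(S_0\in\mathcal{S}\), states at times \(2,\dots,T\) consistent with \(pred\), and \(S\in pred(S_0)\) closing the loop (using the cyclic convention \(a_\delta(-i)=a_\delta(T-i)\), valid because \(\Lambda\mid T\) by Lemma~\ref{lemma:cyclicArrival}) is the same as choosing a length‑\(T\) block that, repeated, alternates \(D\)/\(U\) — feasibility is precisely what \(pred\) encodes, as from an alignment the lock may wait only if \(w_\delta=0\) and otherwise must process and flip — and is single‑wait, since \(w_\delta,w_{f(\delta)}\in\{0,1\}\). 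Next I would verify the cost bookkeeping: when the lock switches to \(\delta\) at time \(t-1\) (the case \(w_\delta=0\) in state \(S\) at time \(t\)), the processed batch of \(\delta'=f(\delta)\)-vessels is exactly those arriving in the last \(w(S') = 2 + w'_\delta + w'_{f(\delta)}\) periods, because the preceding \(\delta'\)- and \(\delta\)-phases together span that many periods; a vessel arriving at \(t-i\) then waits \(i-1\) periods, charged as \(\sum_{i=1}^{w(S')}(i-1)a_{\delta'}(t-i)\), so summing transition costs over the whole closed walk reproduces \(\sum_{t=1}^{T} C_\sigma(t)\), since \(C_\sigma(t)=n_D(t)+n_U(t)\) counts each waiting vessel once per period and each vessel is charged exactly once, at its processing step. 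Because the block is used cyclically and a feasible single‑wait schedule switches direction within any \(T\) consecutive periods, \(n_D\) and \(n_U\) are \(T\)-periodic, so by the computation in the proof of Lemma~\ref{lemma:existence} the long‑run average equals \(\tfrac1T\sum_{t=1}^{T}C_\sigma(t)\); combined with a straightforward induction on \(t\) (in the lexicographic order in which the table is filled) showing that \(D(t,S,S_0)\) is the minimum total waiting cost over prefixes of length \(t\) from \(S_0\) to \(S\), this proves that~\eqref{eq:Termination} returns the minimum long‑run average waiting time over all feasible single‑wait \(T\)-periodic schedules.

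It then remains to check that this class contains an optimal schedule. By Lemma~\ref{lemma:2wOpt} there is an optimal single‑wait schedule, and by Lemma~\ref{lemma:cyclicOpt} an optimal one of period at most \(8\Lambda\); such a schedule is a closed walk in the state graph enriched with the arrival phase in \(\mathbb{Z}/\Lambda\mathbb{Z}\), a graph with \(8\Lambda\) vertices. Decomposing it into simple cycles, some simple cycle has mean cost no larger than that of the walk and hence is also optimal; its length is at most \(8\Lambda\) and, since the arrival phase advances by one each step, is a multiple of \(\Lambda\), hence a divisor of \(T=8\Lambda\). Repeating that simple cycle an appropriate number of times gives a length‑\(T\) closed walk of the same mean cost — exactly an object the algorithm considers — so the returned value equals the optimum. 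The main obstacle is the cost bookkeeping: confirming that the lookback window \(w(S')\) and the per‑vessel charge \(i-1\) reproduce \(\sum_t(n_D(t)+n_U(t))\) on the wrapped‑around schedule, including the boundary closure via the cyclic arrival convention; the passage from ``period at most \(8\Lambda\)'' to ``period dividing \(T\)'' is the other delicate point, but becomes routine once phrased as a minimum‑mean‑cycle argument.
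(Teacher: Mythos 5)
Your time-complexity analysis and your cost bookkeeping match the paper's argument (the paper charges \(O(k)\) per transition rather than precomputing \(a_D,a_U\), but both yield \(O(kT)\)), and your observation that the DP ranges exactly over the feasible single-wait schedules repeating with period \(T\) is the same enumeration argument the paper gives, just spelled out. The one step that fails is your bridge from ``optimal period at most \(8\Lambda\)'' to ``optimal period dividing \(T=8\Lambda\)'': you claim a simple cycle in the enriched state graph has length a multiple of \(\Lambda\) and at most \(8\Lambda\), ``hence a divisor of \(T=8\Lambda\).'' That implication is false: \(3\Lambda\), \(5\Lambda\), \(6\Lambda\) and \(7\Lambda\) are multiples of \(\Lambda\) not exceeding \(8\Lambda\) that do not divide \(8\Lambda\), and nothing rules out simple cycles of those lengths (a cycle of length \(m\Lambda\) merely has to visit \(m\) distinct lock-states at each phase, which is possible for any \(m\le 8\)). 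So repeating your chosen simple cycle need not produce a closed walk of length exactly \(T\), and the termination step \eqref{eq:Termination}, which only inspects \(D(T,\cdot,\cdot)\), would not necessarily see it.

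To be fair, you have put your finger on a real subtlety that the paper's own proof glosses over: Lemma~\ref{lemma:cyclicOpt} guarantees an optimal schedule of period \emph{at most} \(8\Lambda\), while the theorem's proof silently treats this as period exactly \(T\). Your minimum-mean-cycle framing is the right way to close that gap, but the repair must be finished differently: either enlarge \(T\) to \(\lcm(1,\dots,8)\cdot\Lambda = 840\Lambda\), so that every admissible simple-cycle length \(m\Lambda\) divides \(T\) and can be repeated to a length-\(T\) closed walk of the same mean, or run the termination step once for each candidate length \(m\Lambda\), \(m=1,\dots,8\), returning the best value of \(\frac{1}{m\Lambda}\bigl(D(m\Lambda,S,S_0)+C(1,S,S_0)\bigr)\). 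Either fix preserves the \(O(kT)\) bound. Apart from this arithmetic slip, your write-up is a more careful rendition of the paper's proof rather than a genuinely different route.
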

    \begin{proof}
        We discuss the time complexity and the correctness of the algorithm separately.
		\medskip 
		
		\noindent \emph{Time complexity.} 
        Observe that set \(\mathcal{S}\) is of \(O(1)\) size and thus the size of the dynamic programming table is bounded by \(O(T)\). 
        In each state, the computation of the cost can be done in \(O(k)\) time, while there are \(O(1)\) transitions. 
        Therefore, the algorithm runs in \(O(kT)\) time.
		
		\medskip
		\noindent \emph{Correctness.} 
        Due to the construction of the dynamic programming table, the algorithm enumerates all periodic schedules with a period \(T\) that are single-wait, while pruning schedules that are dominated.
        By Lemma~\ref{lemma:2wOpt} and~\ref{lemma:cyclicOpt}, there must exist an optimal schedule that is periodic with period \(T\) and single-wait.
        Thus, the algorithm must return an optimal solution.
    \end{proof}

    \subsection{Approximation scheme}\label{sec:approxAlgorithm}
    The dynamic programming approach computes an optimal periodic schedule of length that is bounded by a polynomial of the input parameters in unary encoding. 
    If these parameters are very large, it is not only intractable to compute such a schedule, but even storing such a large schedule would be impractical. 
    For practical purposes, it would be useful to have an algorithm that, in reasonable time, computes a good sequence of actions for the immediate future. 
    In this section, we show how to use Algorithm~2 to only look a "short" amount of time into the future to obtain schedules that are close to optimal in a reasonable amount of time.

    We design a so-called incremental polynomial time algorithm, which means that the next set in the list of output sets is generated in time that is polynomial in the size of the input plus the size of the generated part of the output. Such algorithms have been applied in enumeration and high multiplicity scheduling, see e.g. Brauner et al. \cite{brauner2007multiplicity} and Golovach et al. \cite{golovach2015incremental}. 
    Our algorithm yields schedules with an approximation guarantee. 
    For any \(\varepsilon > 0\) and a given time \(t\), we compute a schedule for the next \(O(k^2 / \varepsilon)\) time periods, which is an \((1 + \varepsilon)\)-approximation of the optimal periodic schedule over this interval.
    Our algorithm is thus an approximation scheme that runs in incremental polynomial time.
    Besides the theoretical interest, the incremental construction of the schedule may be used by practitioners to adjust their schedule to unforeseen delays of some of the vessel streams.

    Before discussing the algorithm, we require two subroutines.  
    Consider two time periods \(t\) and \(t'\), where \(t \leq t'\).
    Define the algorithm \({ALG}(t, t')\), which computes the optimal schedule for the time interval from \(t\) to \(t'\), assuming the lock's position at time \(t\) is \emph{free}.
    Here, \emph{free} means that the lock's position can be arbitrarily oriented using lock operations during time periods \(t-2\) and \(t-1\).
    Additionally, define the algorithm \({ALG}(t, t', \delta)\) that computes the optimal cost and schedule from time \(t\) to \(t'\), assuming the lock's position at time \(t\) is \(\delta\). 
    These algorithms are straightforward extensions of Algorithm~2 and are omitted from the writeup. 
    For a given schedule \(\sigma\), we define \(\sigma(t,t')\) as the schedule from \(t\) to \(t'\) and \(c(\sigma(t,t'))\) as the waiting time that occurs in that period.

    The algorithm takes as input a time period \(t\), a constant \(T\), and optionally a lock position \(\delta\) at time \(t\); otherwise, the lock position at time \(t\) is assumed to be free. 
    The algorithm outputs a schedule \(\sigma(t,t'))\), and optionally a lock position \(\delta\) at time \(t'+1\). If the position \(\delta\) is given, the algorithm computes \(ALG(t, t + T, \delta)\); otherwise, it computes \(ALG(t, t + T)\). 
    In either case, let \(\sigma(t,t+T)\) denote the resulting schedule.
    The algorithm evaluates the following cases:
    
\begin{itemize}
    \item[(i)] If, during the interval from \(t\) to \(t + T\), there are consecutive periods without arrivals, let \(t_1\) denote the last time period before these intervals and \(t_2\) denote the first time period after them. We define a schedule \(\sigma'\) such that \(\sigma'(t') = \sigma(t')\) for \(t \leq t' \leq t_2 - 2\), while \(\sigma'(t_2 - 1)\) and \(\sigma'(t_2)\) are left undefined to ensure a free lock position at the subsequent time period.
    Return \(\sigma'(t, t_2)\)
    
    \item[(ii)] Otherwise, and if \(c(\sigma(t, t + T)\leq 2k/\varepsilon\), let \(t' = t + T/2 + 1\). 
    Let \(\delta'\) denote the position of the lock at time \(t'+1\). 
    Return \(\sigma(t, t')\) and \(\delta'\). 
    
    \item[(iii)] Otherwise, we define a schedule \(\sigma'\) such that \(\sigma'(t') = \sigma(t')\) for \(t \leq t' \leq t + T\), while \(\sigma'(t + T + 1)\) and \(\sigma'(t + T + 2)\) are left undefined to ensure a free lock position at the subsequent time period.
    Return \(\sigma'(t, t + T + 2)\).
\end{itemize}
    We refer to this as Algorithm~3 and now state the approximation guarantee. 
    \begin{theorem}
        Given a time period \(t\), assume that the lock is in an arbitrary position or in the position \(\delta\), such that there exists an optimal periodic solution with the lock positioned at \(\delta\) at time~\(t\). 
        Consider the output of Algorithm~3 when called with \(t\), \(T = 40k^2/\varepsilon\), and optionally \(\delta\).
        We have that \(\sigma(t,t')\) such that \(c(\sigma(t,t')) \leq (1+\varepsilon)c(\sigma^*(t,t'))\), where \(\sigma^*\) is an optimal periodic solution.
        If a position \(\delta\) is part of the output, then an optimal periodic solution exists with the lock positioned at \(\delta\) at time \(t' + 1\).
    \end{theorem}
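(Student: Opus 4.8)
The plan rests on two facts used throughout. First, $ALG(t,t+T)$ (resp.\ $ALG(t,t+T,\delta)$) returns a schedule $\sigma$ that is \emph{optimal on the window} $[t,t+T]$, and the restriction of any optimal periodic solution $\sigma^*$ to $[t,t+T]$ is a feasible window schedule — its orientation at time $t$ is free-compatible, or equals $\delta$ by the hypothesis imposed on $\delta$ — so we always have the baseline inequality $c(\sigma(t,t+T)) \le c(\sigma^*(t,t+T))$. Second, by Lemma~\ref{lemma:2wOpt} we may take $\sigma$ single-wait, so between two consecutive processing operations at most one $W$ occurs; this keeps the lock's state small and limits how much backlog a window-optimal schedule can carry. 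The three branches (i)--(iii) of Algorithm~3 are mutually exclusive and exhaustive, and I would verify the claimed guarantee for each; note that an orientation is part of the output only in branch (ii), so the positional claim needs to be checked only there.

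Branch (iii) is the cleanest. Here there is no long run of empty periods and $c(\sigma(t,t+T)) > 2k/\varepsilon$, and the output agrees with $\sigma$ on $[t,t+T]$ and then leaves two periods undefined to free the lock, so $t'=t+T+2$. Its cost is $c(\sigma(t,t+T))$ plus the waiting incurred in those two periods, which is at most $O(k)$ new arrivals plus twice the number of vessels $\sigma$ leaves waiting at time $t+T$. A window-optimal schedule cannot leave many vessels of one direction waiting there: a backlog of size $m$ must have accumulated over at least $m/k$ periods, each of which already contributed to the waiting cost, forcing $c(\sigma(t,t+T)) = \Omega(m^2/k)$ and hence $m = O(\sqrt{k\cdot c(\sigma(t,t+T))})$. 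So the boundary overhead is $O(\sqrt{k\cdot c(\sigma(t,t+T))}) + O(k)$, a vanishing fraction of $c(\sigma(t,t+T))$ once the latter exceeds the branch threshold, which is calibrated exactly for this. Combining with $c(\sigma(t,t+T)) \le c(\sigma^*(t,t+T)) \le c(\sigma^*(t,t'))$ gives $c(\sigma(t,t')) \le (1+\varepsilon)\,c(\sigma^*(t,t'))$, and the two undefined periods leave the lock free for the next call.

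Branch (i) handles a run of consecutive empty periods, which decouples the window: an optimal single-wait schedule must, within $O(1)$ periods of entering the run, clear all backlog (it cannot wait twice and has nothing to process), after which its orientation can be set arbitrarily at no cost. Hence the restriction of $\sigma$ to the prefix ending just before the run is optimal on that prefix, so its cost is at most that of $\sigma^*$ on the same prefix; since the output is exactly this prefix (the last couple of periods lie inside the empty run and are handed to the next call, which starts free and can dispose of them harmlessly), we get $c(\sigma(t,t')) \le c(\sigma^*(t,t'))$. The bookkeeping — that the handed-off empty periods are not double-counted and that ``free'' really lets the next call orient the lock — is routine given the single-wait structure.

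Branch (ii) — no long gap and $c(\sigma(t,t+T)) \le 2k/\varepsilon$ — is the main obstacle, and is where $T=40k^2/\varepsilon$ is used. The output is the first half $\sigma(t,t')$ with $t'=t+T/2+1$, together with the orientation $\delta'$ at time $t'+1$. Since the window cost is at most $2k/\varepsilon$ while the window spans $T=40k^2/\varepsilon$ periods, $\sigma$ has zero backlog in all but at most $2k/\varepsilon$ of them; in particular there is a backlog-free period $\tau$ very close to $t'+1$, and by ``no long gap'' the few empty periods near it are isolated. At such a clean point the schedule is essentially forced to process each arrival on arrival, so $\sigma$'s restriction to $[t,\tau]$ differs from an optimal schedule on $[t,\tau]$ by only $O(1)$ — one can reroute through the clean state at $\tau$ at negligible cost, the boundary at $t+T$ being too far to matter — and symmetrically one can modify an optimal periodic solution on a bounded initial segment so that it agrees with $\sigma$ up to $\tau$ without changing its long-run average cost, hence passes through the same clean state, with orientation $\delta'$, at $t'+1$. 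This simultaneously yields $c(\sigma(t,t')) \le (1+\varepsilon)\,c(\sigma^*(t,t'))$ (with ratio essentially $1$) and produces the optimal periodic witness with orientation $\delta'$ at $t'+1$ that the next call requires. Making precise that ``rerouting through a clean state is cheap'' and ``pasting a finite segment into a periodic schedule does not raise its average cost'' — in particular controlling the $O(1)$ time shift introduced when re-synchronising orientations at $\tau$, and checking the pasted schedule stays periodic and optimal — is the delicate part; everything else reduces to the baseline window-optimality inequality and elementary counting.
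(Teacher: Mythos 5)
Your case split and the treatment of branches (i) and (iii) match the paper's proof in structure, but branch (ii) contains a genuine gap: you explicitly defer as ``the delicate part'' precisely the step that constitutes the heart of the paper's argument. It is not enough to find a backlog-free period $\tau$ of $\sigma$ near $t'+1$ and assert that one can ``reroute through the clean state at negligible cost''; a single clean period of $\sigma$ says nothing about the state of $\sigma^*$ there, and a reroute is not free in general. The paper closes this with a quantitative exchange argument that uses the full length $T=40k^2/\varepsilon$: since $c(\sigma(t,t+T))\le 2k/\varepsilon$ while the second half of the window has $20k^2/\varepsilon$ periods, the second half contains a \emph{zero-cost stretch of length at least $10k$}. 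If $\sigma^*$ never reaches the same state as $\sigma$ during this stretch, then (because branch (ii) guarantees an arrival at least every other period, and $\sigma$ serves every arrival in the stretch with zero wait) $\sigma^*$ must pay at least $5k$ on the stretch; but switching $\sigma^*$ onto $\sigma$'s trajectory at the start of the stretch and back off at the end costs at most $2k$ per switch, i.e.\ $4k$ total, so the modified schedule improves on $\sigma^*$ by at least $k$ --- contradicting its optimality. The alignment point so obtained lies after $t'$, and it is this coincidence of states that simultaneously yields $c(\sigma(t,t'))\le c(\sigma^*(t,t'))$ and an optimal periodic solution positioned at $\delta'$ at time $t'+1$. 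Without this counting argument your proof of both the ratio and the positional claim in branch (ii) is incomplete; the length of the zero-cost stretch (hence the choice $T=\Theta(k^2/\varepsilon)$) is exactly what makes the $5k$-versus-$4k$ comparison go through.

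A secondary, fixable issue: in branch (iii) your terminal-backlog bound $m=O(\sqrt{k\,c})$ combined only with $c>2k/\varepsilon$ gives an overhead of order $c\sqrt{\varepsilon}$, not $\varepsilon c$, so as stated you obtain a $(1+O(\sqrt{\varepsilon}))$-approximation rather than $(1+\varepsilon)$. The paper avoids this by bounding the cost of the two clearing periods directly by $2k$ (a window-optimal single-wait schedule carries only $O(k)$ backlog at any time, since each side is processed at least once every four periods), so the overhead is $O(k)\le\varepsilon\cdot c(\sigma(t,t+T))$, which suffices.
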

    \begin{proof}
        We show that the statement holds true for each case that is defined in Algorithm~3. 

        Consider case (i). Since there there are at least two consecutive periods of no arrival. We know that the lock can changed in any position during time \(t_1\) and \(t_2\) with no additional waiting time. Since, the position of the lock at \(t\) is either arbitrary or the same position of the optimal periodic schedule, we know that \(c(\sigma(t,t_2-2)) \leq c(\sigma^*(t,t_2-2))\) for an optimal periodic  solution \(\sigma^*\). 
        Furthermore, we can use time period \(t_2 - 2\) and \(t_2 - 1\) to position the lock arbitrarily at no cost.

        Consider case (ii). We know that during the interval from \(t\) to \(t + T\), there are no consecutive periods without arrivals, and \(c(\sigma(t, t + T)) \leq 2k/{\varepsilon}\). 
        Given that the length of the time window is \(40k^2/{\varepsilon}\), we know that in the second half of the solution, there is a consecutive period of at least \(10k\) with zero cost.
        We claim that the optimal periodic  solution aligns with the partial exact solution at some point during this zero-cost interval. If it does not align at some point, it implies that all arrivals during this period wait at least one unit of time. 
        Since we have at least one arrival every three time units, this would incur a cost of at least \(5k\). 
        In this scenario, we could improve the optimal periodic solution by switching to the partial exact solution at the start of the zero-cost period and switching back at the end. 
        Given that the cost of a switch is upper bounded by \(2k\), the two switches would incur a total cost of \(4k\), thereby reducing the cost of the optimal solution by at least \(k\), implying a contradiction. Since the partial exact solution reaches the same state as the optimal periodic  solution at some point, we know that the cost of the partial exact solution leading up to this point is also optimal. Thus, \(c(\sigma(t,T/2)) \leq c(\sigma^*(t,T/2))\) and the position at \(T/2+1\) is the same as the one from the optimal periodic solution, thus satisfying the claimed property.

        Consider case (iii). We know that during the interval from \(t\) to \(t + T\), there is no consecutive period without arrivals, and \(c(\sigma'(t, t + T) > 2k/{\varepsilon}\). We can use time periods $T+1$ and $T+2$ to set the lock at an arbitrary position \(\delta\) at cost of at most $2k$. Since the \(\sigma'\) solution is a lower bound on the cost of the part of the optimal periodic  solution from time $t$ to time $t + T$, we know that 
        \begin{align*}
            c(\sigma'(t, t+T+2))
            &= c(\sigma'(t, t+T)) + c(\sigma'(t+T, t+T+2)) \\
            &\leq \frac{2k}{\varepsilon} + 2k \\
            &\leq (1 + \varepsilon) c(\sigma^*(t, t+T+2))\\        \end{align*}
        thus satisfying the claimed property.
    \end{proof}
    
    \section{Numerical experiments} \label{sec:numerical}
    Based on the algorithms introduced in Section~\ref{sec:matching} and~\ref{sec:scheduling}, this section aims at deriving first algorithmic and managerial insights regarding periodic lock management.
    As outlined above, further research is necessary to fully understand the system in its general form, and thus we focus here on a fairly basic research question.

    \vspace{.1cm}
    \begin{itemize}
    \item[Q1:] \hypertarget{Q3}{} How computationally expensive is it to compute the optimal periodic matching, and how well does the periodic arrival align with the actual data?
    \item[Q2:] \hypertarget{Q4}{}  How good is the optimal periodic schedule when applied to periodic arrivals, compared to its performance on the original arrival data?
    \item[Q3:] \hypertarget{Q5}{} When using the optimal periodic matching as input to create an optimal schedule, how does its performance on the original data compare to other intuitive scheduling policies?
    \end{itemize}
    \vspace{.1cm}

    The numerical experiments were conducted using C++20 on a MacBook Air (2022) equipped with an Apple M2 chip, 16 GB of RAM, and running macOS Sonoma 14.4.1. In Algorithm~1, a recursion was implemented to enumerate all possible distributions of vessel counts partitioned across all streams.
    The recursive algorithm was optimized by pruning branches that lead to symmetric distributions, resulting in a substantial speedup. Furthermore, the dynamic programming approach that constitutes Algorithm~2 was implemented using tabulation. 

    \subsection{Experimental Setup}
    We analyzed AIS data containing GPS and timestamp information to determine vessel arrival times at the Prinses Marijkesluizen, a key connection between the Rhine and the Amsterdam-Rhine Canal in the Netherlands. Timestamps are discretized into one-minute intervals. 
    The dataset captures vessel arrivals at the Prinses Marijkesluizen between January 2, 2019, and March 28, 2019 spanning 86 days.
    On average, the upstream direction experiences 53.41 vessel arrivals per day, while the downstream direction sees 55.01 arrivals per day. 
    Additionally, based on GPS coordinates, the average duration of a lock operation was calculated and rounded to the nearest integer, resulting in a duration of 21 minutes.

    For the periodic matching problem, for each day and each direction we select a number of stream \(k \in \{2,3,4\}\) and a number of vessels \(n \in \{30, 40, 50\}\) on each direction. 
    We define the input by specifying \(k\), the arrival times of the first \(n\) vessels, and defining the time interval by the start of the day and the arrival time of the \(n\)-th vessel. 
    If fewer than \(n\) vessels arrive on that day, we set \(n\) to the actual number of arriving vessels.
    Thus, Algorithm~1 is applied to 1548 input instances.
    For a given day and set of parameters, the outputs of Algorithm~1 for each direction serve as input instances for Algorithm~2. 
    This yields a total of 774 input instances for Algorithm~2.
    
    We further implemented intuitive operational strategies for managing the lock operations.
    First, we evaluate a strategy based on alternating upstream and downstream lock movements. Two schedules are generated, each starting with a different initial movement, and the better-performing schedule is selected.
    We refer to this strategy as \emph{alternating}.
    Second, we implement a first-in-first-out, also \emph{FIFO}, strategy, where the lock operator considers only the current and past time periods. 
    If a vessel arrives during the current time step, the operator executes a lock operation. Similarly, if vessels arrived in the previous time step but were not processed, the operator also executes a lock operation.
    Lastly, we implemented an advanced first-in-first-out, also \emph{advFIFO}, strategy, where the lock operator also considers the next time period. If no vessels arrive in the current time period, but a vessel arrives from a specific side in the next time period, the lock is preemptively operated to accommodate the incoming vessel.   
    Ties are broken arbitrarily.
    
    \subsection{Evaluation of research question Q1}
    In order to analyse Q1, we applied Algorithm~1 on the dataset.
    The results are summarised in Table~\ref{tab:resultsAlg1}.
    Grouped by parameters \(k\) and \(n\), the table provides information about runtime (column \emph{runtime}), measured in seconds, and the average deviation between each arrival and its corresponding matched time, measured in minutes (column \emph{Fit}).
    
    As we would expect from an \(O(n^k)\) algorithm, the running time increases quickly for larger \(n\) and \(k\) having a runtime of less than a second for \(k = 2\) and \(n = 20\) up to \(49\) minutes and \(50\) seconds for \(k = 4\) and \(n = 50\). 
    Furthermore,  the fit improves as the number of streams increases, which follows from the fact that any solution \(k-1\) streams can be replicated with \(k\) streams.
    The trend in fit with respect to the number of vessels is less clear.
    For all \(k\), we observe that the fit increases from \(n = 20\) to \(n = 30\), then it decreases as \(n\) increases to \(50\).
    One possible explanation is for \(n \leq 30\), the time span is too short for a periodic pattern to emerge in the data.
    However, for \(n = 20\), the model has more degrees of freedom than for \(n = 30\), leading the estimator to overfit the data and produce a good fit.
    From \(n = 40\) onwards, the time span is large enough for periodicity to emerge in the data, resulting in a better fit.
    We hypothesize that the fit improves further with an increasing time span.
    However, the computational costs of Algorithm~1 prevent a further analysis of the trend.    

    \subsection{Evaluation of research question Q2 and Q3}
    To analyze Q2 and Q3, we applied scheduling algorithms to both the actual dataset and the periodic dataset, where the latter consists of solutions obtained by applying Algorithm~1 to the original dataset. The results are summarized in Table~\ref{tab:resultsAlg2}. Grouped again by parameters \(k\) and \(n\), the table provides the average waiting time per vessel (measured in minutes) for each implemented setting. Specifically, we applied Algorithm~2 to the periodic dataset (column \emph{periodicOpt}), \emph{alternating} to the actual dataset (column \emph{alternating}), \emph{FIFO} to the actual dataset (column \emph{FIFO}), \emph{advFIFO} to the actual dataset (column \emph{advFIFO}), and Algorithm~ to the actual dataset (column \emph{realisedPeriodic}).

    Note that \emph{periodicOpt} is the average deviation in the most optimistic scenario in which the vessel arrival are truely periodic. 
    Thus, these values serve as benchmarks for comparison. 
    Regarding Q2, when examining the ratio of \emph{realisedPeriodic} to \emph{periodicOpt}, the highest observed ratio is \(3.39\), the lowest is \(1.34\), and the average is \(2.05\). Notably, the ratios observed for cases \(n = 20\) vessels are substantially higher; excluding these cases reduces the average ratio to \(1.67\).

    Regarding Q2, when examining the alternative policies, we observe that \emph{alternating} clearly outperforms all other implemented policies. 
    Comparing \emph{alternating} to \emph{periodicOpt}, the ratio ranges from \(2.12\) for \(n = 20\) to \(1.65\) for \(n = 50\).
    Furthermore, the \emph{advFIFO} policy consistently outperforms the \emph{FIFO} policy in most cases.
    The results indicate that periodic schedules perform relatively well on the actual dataset when compared to the optimal schedules derived from truly periodic arrivals.
     
\begin{table}[t]
\centering
\renewcommand{\arraystretch}{1.5} 
\setlength{\tabcolsep}{15pt} 
\begin{tabular}{l c c c}
\hline
k & n & Runtime & Fit \\ \hline
\multirow{4}{*}{2} 
  & 20 & 0.01 & 113.90 \\ 
  & 30 & 0.04 & 199.76 \\ 
  & 40 & 0.13 & 152.66 \\ 
  & 50 & 0.27 & 131.16 \\ \hline
\multirow{4}{*}{3} 
  & 20 & 0.25 & 99.66 \\ 
  & 30 & 2.61 & 187.55 \\ 
  & 40 & 12.45 & 144.27 \\ 
  & 50 & 54.11 & 124.59 \\ \hline
\multirow{4}{*}{4} 
  & 20 & 2.62 & 85.12 \\ 
  & 30 & 79.69 & 176.11 \\ 
  & 40 & 739.45 & 136.47 \\ 
  & 50 & 2990.27 & 118.33 \\ \hline
\end{tabular}
\caption{The results of the periodic matching problem, solved using Algorithm~1, are presented. Runtime is measured in seconds, while the fit is measured in minutes and represents the average deviation between the matching point and arrival time in the data.}
\label{tab:resultsAlg1}
\end{table}

\begin{sidewaystable}[h!]
\centering
\renewcommand{\arraystretch}{1.5} 
\setlength{\tabcolsep}{10pt} 
\begin{tabular}{l c c c c c c}
\hline
k & n & periodicOpt & alternating & FIFO & advFIFO & realisedPeriodic \\ \hline
\multirow{4}{*}{2} 
  & 20 & 3.09 & 6.55 & 7.39 & 7.59 & 10.47 \\ 
  & 30 & 6.18 & 8.72 & 10.01 & 9.29 & 11.36 \\ 
  & 40 & 9.40 & 11.83 & 13.50 & 12.33 & 12.64 \\ 
  & 50 & 9.01 & 14.84 & 16.59 & 15.76 & 16.57 \\ \hline
\multirow{4}{*}{3} 
  & 20 & 3.15 & 6.55 & 7.39 & 7.59 & 9.67 \\ 
  & 30 & 6.26 & 8.72 & 10.01 & 9.29 & 11.61 \\ 
  & 40 & 9.38 & 11.83 & 13.50 & 12.33 & 12.91 \\ 
  & 50 & 9.90 & 14.84 & 16.59 & 15.76 & 16.82 \\ \hline
\multirow{4}{*}{4} 
  & 20 & 3.63 & 6.55 & 7.39 & 7.59 & 11.19 \\ 
  & 30 & 6.15 & 8.72 & 10.01 & 9.29 & 11.83 \\ 
  & 40 & 8.74 & 11.83 & 13.50 & 12.33 & 13.03 \\ 
  & 50 & 9.88 & 14.84 & 16.59 & 15.76 & 16.23 \\ \hline
\end{tabular}
\caption{The table presents the results of the scheduling problem solved using various algorithms. The values represent the waiting time per vessel, measured in minutes. \emph{periodicOpt} refers to the results of Algorithm~2 applied to the periodic arrival data. \emph{alternating} represents the outcomes of the best alternating strategy. \emph{FIFO} denotes the first-in, first-out approach, while \emph{advanced FIFO} accounts for the next time period in its scheduling. \emph{realisedPeriod} represents the schedule derived from applying \emph{periodicOpt} to the original arrival data.}
\label{tab:resultsAlg2}
\end{sidewaystable}

    \section{Conclusions}\label{sec:conclusions}
    In this paper, we explored a comprehensive framework for periodic lock scheduling. 
    We proposed algorithms to identify regular streams that closely align with irregular arrival data. 
    Additionally, we developed algorithms that utilize this input data to compute optimal schedules for the lock. 
    We analyzed these algorithms both theoretically and through numerical experiments.
    We close the paper by mentioning a few open problems.

   From a practical perspective, the simplifications introduced in our model limit its direct applicability for practitioners.
   Specifically, we need to integrate the lock’s capacity along with individual vessel sizes and their respective loading times.
   Furthermore, the integration between periodic lock scheduling and unforeseen delays during vessel entry into the lock chamber—addressed by Buchem et al. \cite{buchem2022vessel}—represents another direction for future research.

    From a theoretical perspective, the complexity of both the periodic scheduling problem and the periodic matching problem remains open and is of interest.
    Furthermore, all the practical extensions mentioned above raise new questions regarding algorithm design and complexity bounds.
    
\bibliographystyle{abbrv}
\bibliography{cyclic.bib}
\end{document}